\definecolor{DarkBlue}{rgb}{0.1,0.1,0.5}
\definecolor{DarkGreen}{rgb}{0.1,0.5,0.1}
\newcommand{\extra}[1]{}
\newtheorem{theorem}{Theorem}
\newtheorem{definition}{Definition}
\newtheorem{lemma}{Lemma}
\def\squareforqed{\hbox{\rlap{$\sqcap$}$\sqcup$}}
\def\qed{\ifmmode\squareforqed\else{\unskip\nobreak\hfil
\penalty50\hskip1em\null\nobreak\hfil\squareforqed
\parfillskip=0pt\finalhyphendemerits=0\endgraf}\fi}
\def\endenv{\ifmmode\;\else{\unskip\nobreak\hfil
\penalty50\hskip1em\null\nobreak\hfil\;
\parfillskip=0pt\finalhyphendemerits=0\endgraf}\fi}
\renewenvironment{proof}{\noindent \textbf{{Proof~} }}{\qed\medskip}
\newenvironment{proof+}[1]{\noindent \textbf{{Proof #1~} }}{\qed\medskip}
\mathchardef\ordinarycolon\mathcode`\:
\def\vcentcolon{\mathrel{\mathop\ordinarycolon}}
\title{\bfseries Existence and Computation of Maximin Fair Allocations \\ Under Matroid-Rank Valuations}
\author{Siddharth Barman\thanks{Indian Institute of Science. {\tt barman@iisc.ac.in}} \and Paritosh Verma\thanks{Indian Institute of Science. {\tt paritoshverma97@gmail.com}}}
\date{}
\newcommand{\EFone}{\textsc{EF}1}
\newcommand{\EFx}{\textsc{EFx}}
\newcommand{\MMS}{\textsc{MMS}}
\newcommand{\PMMS}{\textsc{PMMS}}
\newcommand{\alloc}{\mathcal{A}}
\newcommand{\palloc}{\mathcal{P}}
\newcommand{\SW}{\operatorname{SW}}
\newcommand{\umat}{\widehat{\mathcal{M}}}
\newcommand{\urank}{\widehat{r}}
\begin{document}

\maketitle 

\begin{abstract}
We study fair and economically efficient allocation of indivisible goods among agents whose valuations are rank functions of matroids. Such valuations constitute a well-studied class of submodular functions (i.e., they exhibit a diminishing returns property) and model preferences in several resource-allocation settings. We prove that, for matroid-rank valuations, a social welfare-maximizing allocation that gives each agent her maximin share always exists. Furthermore, such an allocation can be computed in polynomial time. We establish similar existential and algorithmic results for the pairwise maximin share guarantee as well.

To complement these results, we show that if the agents have  binary XOS valuations or weighted-rank valuations, then maximin fair allocations are not guaranteed to exist. Both of these valuation classes are immediate generalizations of matroid-rank functions. 
\end{abstract}

\section{Introduction}
Discrete fair division is an active field of work at the interface of computer science and mathematical economics. This area studies fair and economically efficient allocation of goods (resources) that cannot be fractionally assigned. Indeed, many real-world settings (such as course allocation \cite{budish2017course} and division of inheritance) entail assignment of discrete resources. Motivated, in part, by such domains, a significant body of work in recent years has been directed towards notions of fairness (and complementary algorithms) that are applicable in the indivisible context; see, e.g.,~\cite{BCEL+16} and~\cite{endriss2017trends} for textbook expositions. Arguably, the two most prominent fairness notions in discrete fair division are envy-freeness up to one good~\cite{Bud11,LMMS042} and the maximin share guarantee~\cite{Bud11}. 

An allocation (a partition) of the indivisible goods among the agents is said to be envy-free up to one good ($\EFone$) iff each agent values her own bundle over the bundle of any other agent, up to the removal of some good from the other agent's bundle. $\EFone$ provides a cogent relaxation of the classic fairness criterion of envy freeness\footnote{An allocation is said to be envy-free iff every agent values her bundle at least as much as she values any other agent's bundle.} in the indivisible goods context. While simple examples rule out the existence of envy-free allocations of indivisible goods, $\EFone$ allocations are guaranteed to exist under mild assumptions. In particular, if the agents valuations are monotonic, then an $\EFone$ allocation exists and can be computed in polynomial time \cite{LMMS042}.

Another relaxation of envy-freeness is obtained via maximin shares, that---for each agent $i$---is defined to be the maximum value that $i$ can achieve by partitioning all the goods into $n$ bundles and then receiving a minimum valued (according to $i$) one; throughout, $n$ will denote the number of agents participating in the fair division exercise. An allocation is said to be a maximin share ($\MMS$) allocation iff each agent receives a bundle of value at least as much as her maximin share. That is, maximin shares correspond to an intuitive threshold and, an allocation is deemed to be fair, under this criterion, iff the agent-specific threshold is met for every agent. Maximin share allocations are not guaranteed to exist--this result holds, in particular, for instances in which the agents' valuations are additive \cite{KPW16,PW14}.\footnote{An agent's valuation $v$ is said to be additive iff, for any subset of goods $S$, we have $v(S) = \sum_{g \in S} v(g)$; here $v(g)$ is the value that the agent has for good $g$.} However, this notion is quite amenable to approximation guarantees: under additive valuations, an allocation that assigns each agent a bundle of value at least $\left(\frac{3}{4} + \frac{1}{12n} \right)$ times her maximin share exists~\cite{garg2020improved,ghodsi2018fair}. In addition, for submodular valuations, a $1/3$-approximate maximin share allocation exists and can be found in polynomial time \cite{ghodsi2018fair}; recall that a set function $v$ is said to be submodular iff it satisfies the following diminishing returns property: $v(A \cup \{g \}) - v(A) \geq v(B \cup \{g\}) - v(B)$ for all subsets $A \subseteq B$ and $g \notin B$.   

With relevant fairness criteria (such as $\EFone$ and $\MMS$) in hand, central threads of research in discrete fair division are aimed at understanding (i) the existence of fairness notions, (ii) their computational tractability, and (iii) the impact of fairness guarantees on economic efficiency. The current work contributes to these key themes by establishing positive results for the maximin share guarantee in the context of matroid-rank valuations. 

Rank functions of matroids provide a combinatorial generalization of linear-algebraic notions of independence and rank. They constitute a well-studied class of submodular functions and, in fact, admit the following characterization~\cite[Chapter~39]{schrijver2003combinatorial}: every submodular function $r$ with binary marginals is a matroid-rank function. Recall that a set function $r$ is said to have binary marginals iff $r(A \cup \{ g \}) - r(A) \in \{0,1\}$, for all subsets  $A$ and elements $g$. Rank functions model preferences in several resource-allocation settings. For instance, in the fair allocation of public housing units, the underlying preferences can be expressed as matroid-rank functions~\cite{deng2013story,benabbou2020finding}: to achieve fairness across different ethnic groups, one can model each group as an agent and for a subset $S$ of housing units the utility of an agent/group is obtained by matching the group members to the units in $S$. The matching is based on the members' binary preferences and the resulting (matching-based) valuation is a rank function of a (transversal) matroid. Benabbou et al.~\cite{benabbou2020finding} identify other domains wherein rank functions are applicable. 

For this relevant function class, the recent results of Babaioff et al.~\cite{babaioff2020fair} and Benabbou et al.~\cite{benabbou2020finding} develop polynomial-time algorithms for finding allocations that are both $\EFone$ and Pareto efficient.\footnote{Babaioff et al.~\cite{babaioff2020fair} additionally achieve truthfulness with \emph{Lorenz domination} as a fairness criterion, which implies $\EFone$, along with other fairness notions. They also note that the allocations computed by their mechanism are not guaranteed to be $\MMS$~\cite[Proposition~6]{babaioff2020fair}. Developing a truthful mechanism under rank functions for $\MMS$ remains an interesting direction of future work.}  The current work complements these $\EFone$ results by focussing on $\MMS$.

\paragraph{Our Contributions.} For matroid-rank valuations, we prove that a maximin share allocation is guaranteed to exist. Moreover, in this context, one can achieve fairness along with economic efficiency as well as computational tractability: under matroid-rank valuations, an $\MMS$ allocation that also maximizes social welfare (across all allocations) can be computed in polynomial time (Theorem~\ref{theorem:mms}). Note that a welfare-maximizing allocation is Pareto efficient as well. Also, recall that prior work on submodular valuations implies the existence of an allocation in which each agent receives a bundle of value at least $1/3$ times her maximin share \cite{ghodsi2018fair} and, specifically for matroid-rank functions, the result of Babaioff et al.~\cite{babaioff2020fair} shows that a $1/2$-approximate $\MMS$ allocation always exists. Hence, our existential result for exact $\MMS$ is a novel contribution. The proof of this result is constructive--we develop a polynomial-time algorithm (Algorithm \ref{algo:mms}) that is guaranteed to find the desired allocation. 

The algorithm for finding the desired allocation starts with a social welfare-maximizing allocation, which can be computed efficiently for matroid-rank valuations. Then, it iteratively performs local updates---by swapping goods among selected chains of bundles---till an $\MMS$ allocation is obtained. Even though, at a high level, the algorithm is direct, its analysis relies on interesting applications of deep results from matroid theory, e.g., the matroid union theorem (Lemma \ref{lemma:matroid-union-theorem}). The technical results in the current work highlight interesting connections between maximin shares and the rich literature of matroid theory.  

The maximin share of an agent can be conceptually interpreted through a discrete execution of the cut-and-choose protocol: agent $i$ partitions the goods and the other ($n-1$) agents get to pick a bundle before $i$; here, agent $i$---by maximizing over all $n$-partitions---can guarantee for herself a value equal to her maximin share, irrespective of the choices of the other agents. Building upon this interpretation, Caragiannis et al.~\cite{CKMP+19} consider a meaningful variant wherein the discrete cut-and-choose protocol is executed between all pairs of agents. Specifically, a collection of mutually disjoint bundles (of goods), $A_1, A_2, \ldots, A_n$, is said to satisfy the pairwise maximin share guarantee ($\PMMS$) iff for every pair of agents, $i$ and $j$, agent $i$'s value for her bundle, $A_i$, is at least as much as the share she would obtain by executing the discrete cut-and-choose protocol among two agents and goods in $A_i \cup A_j$. 

We also develop a polynomial-time algorithm (Algorithm \ref{algo:pmms}) for finding a partial allocation that satisfies the pairwise maximin share guarantee and maximizes social welfare, across all allocations (Theorem~\ref{theorem:pmms}). The computed allocation can be partial in the sense that it might not allocate all the goods. 

It is relevant to note that, under monotonic valuations, any partial allocation that satisfies the maximin share guarantee, can be extended into an $\MMS$ allocation which is also complete--one can simply include the unassigned goods into, say, the first agent's bundle; for this reason and ease of presentation, we do not explicitly distinguish between partial and complete allocations in the $\MMS$ context. Such an extension can, however, violate the $\PMMS$ guarantee. Indeed, the existence of a partial $\PMMS$ allocation does not---by itself---imply the existence of a complete $\PMMS$ allocation. While the existence of such a complete allocation remains an interesting open question, we prove that (for matroid-rank valuations) there always exists a $\PMMS$ allocation, which might not be complete, but it maximizes social welfare and, hence, is Pareto efficient (across all allocations, partial and complete). Therefore, this fairness guarantee is obtained without a loss in economic efficiency.

The work of Caragiannis et al.~\cite{CKMP+19} highlights the relevance of $\PMMS$ by showing that, under additive valuations, every $\PMMS$ allocation satisfies envy-freeness up to the removal of \emph{any} good ($\EFx$).\footnote{The universal existence of $\EFx$ allocations, under additive valuations, is an important open question in discrete fair division.} Since $\EFx$ is a stricter criterion than $\EFone$, we get that $\PMMS$ implies $\EFone$, under additive valuations. This implication continues to hold for matroid-rank functions; Theorem \ref{theorem:pmms-efone} (in Appendix \ref{appendix:pmms-efone}) shows that, under submodular valuations, every $\PMMS$ allocation is in fact $\EFone$. Using this implication and our polynomial-time algorithm for $\PMMS$, one can recover the result of Benabbou et al.~\cite{benabbou2020finding}, which shows that (under rank functions)  an $\EFone$ and social welfare-maximizing allocation can be computed in polynomial time. We note that the algorithm of Benabbou et al.~\cite{benabbou2020finding} and the mechanism of Babaioff et al.~\cite{babaioff2020fair} (both developed for matroid-rank valuations) also do not necessarily assign all the goods. Furthermore, a few recent results show that  keeping a subset of goods unassigned can be used to attain particular fairness guarantees along with a bounded loss in efficiency; see, e.g.,~\cite{chaudhury2020little,caragiannis2019envy}.

Notably, the converse implication from $\EFone$ to $\PMMS$ does not hold, i.e., there exist instances, with matroid-rank valuations, wherein particular $\EFone$ allocations do not satisfy the pairwise maximin share guarantee (Appendix~\ref{example:EFone-not-MMS}).

Finally, we show that---in contrast to the above-mentioned positive results---if the agents have  binary XOS valuations or weighted-rank valuations, then maximin fair allocations are not guaranteed to exist (Theorem~\ref{thm:xos} and~\ref{thm:weight-rank}). Both of these valuation classes are immediate generalizations of matroid-rank functions.

\paragraph{Additional Related Work.}
Binary additive valuations are a particular subclass of rank functions and have been studied in multiple fair-division results; see, e.g., \cite{BM04,BL16,KPS15,halpern2020fair}. These additive valuations model settings in which, for each agent, a good is either acceptable or not. In one of the initial results on maximin shares, Bouveret and Lema\^{i}tre \cite{BL16} showed that maximin share allocations exist under binary additive valuations: for such dichotomous valuations, any $\EFone$ allocation is $\MMS$ as well. Such an implication, however, does not hold with matroid-rank functions; e.g., Proposition 6 in ~\cite{babaioff2020fair} identifies instances in which all \emph{leximin}---and, hence, particular $\EFone$---allocations do not satisfy the maximin share guarantee. For completeness, in Appendix~\ref{example:EFone-not-MMS} we provide an example to show that, under matroid-rank valuations, $\EFone$ allocations are not guaranteed to be $\MMS$. 

Under binary additive valuations, $\EFone$ and $\PMMS$ are equivalent notions, i.e., an $\EFone$ allocation is guaranteed to be $\PMMS$ and vice versa. Hence, known algorithms for computing $\EFone$ allocations (e.g.,~\cite{LMMS042} ) are sufficient to compute $\PMMS$ allocations under binary additive valuations. However, these algorithms are not guaranteed to output a $\PMMS$ allocation for matroid-rank valuations, since $\EFone$ no longer implies $\PMMS$ (Appendix~\ref{example:EFone-not-MMS}). 

Few recent results in discrete fair division have considered additive valuations while utilizing matroids to express feasibility constraints; see, e.g., \cite{GMT14,GM17,BB18,DFS20}. These results are complementary to the present work on valuation functions based on matroids.

\section{Notation and Preliminaries}
\label{section:notation}
We consider partitioning $m$ indivisible goods among $n$ agents in a fair and economically efficient manner. We will, throughout, write $[m] \coloneqq \{1,2, \ldots, m\}$ to denote the set of goods and $[n] \coloneqq \{1,2,\ldots, n\}$ to denote the set of agents. The cardinal preference of each agent $i \in [n]$, over subsets of goods, is specified via a valuation function $v_i: 2^{[m]} \mapsto \mathbb{R}_+$; in particular, $v_i(S) \in \mathbb{R}_+$ denotes the value that agent $i \in [n]$ has for a subset of goods $S \subseteq [m]$. In this setup, an instance of the fair division problem is a tuple $\langle [m], [n], \{v_i\}_{i=1}^n \rangle$. For subsets $X \subseteq [m]$ and goods $g \in [m]$, we will use the shorthand $X + g \coloneqq X \cup \{g \}$ and $X - g \coloneqq X \setminus \{ g \}$. 

\paragraph{Allocations and Social Welfare.} For integer $k \in \mathbb{Z}_+$ and subset $S \subseteq [m]$, write $\Pi_k(S)$ to denote the set of all $k$-partitions of $S$. An allocation $\alloc = (A_1, A_2, \ldots, A_n) \in \Pi_n([m])$ is an $n$-partition of all the goods, i.e., $\cup_{i=1}^n A_i = [m]$ and $A_i \cap A_j = \emptyset$ for all $i \neq j$. Here each subset $A_i$ is assigned to agent $i \in [n]$ and will be referred to as a bundle.

The term \emph{partial allocation} will be used to denote a collection of pairwise-disjoint subsets of goods $\mathcal{P} = (P_1, P_2, \ldots, P_n)$, where $P_i$ is assigned to agent $i$. For a partial allocation $\palloc= (P_1, \ldots, P_n)$, the set of unallocated goods is $[m] \setminus \left( \bigcup_{i \in [n]} P_i \right)$. Note that, in contrast to an allocation, for a partial allocation $\mathcal{P} = (P_1, \ldots, P_n)$ it is not necessary that $\cup_{i=1}^n P_i = [m]$. Indeed, $\palloc$ is a complete allocation iff $[m] \setminus \left( \bigcup_{i \in [n]} P_i \right)= \emptyset$.

The social welfare $\SW(\cdot)$ of an (partial) allocation $\alloc=(A_1, \allowbreak \ldots, A_n)$ is the sum of the values that $\alloc$ generates among the agents, $\SW(\alloc) \coloneqq \sum_{i=1}^n v_i(A_i)$. 

\paragraph{Fairness Notions.} The notions of fairness considered in the current work are defined next. For a fair division instance $\langle [m], [n], \{v_i\}_{i=1}^n \rangle$, the \emph{maximin share} of an agent $i$ is defined as 
\[\mu_i \coloneqq \max_{(X_1, \ldots, X_n) \in \Pi_n([m])} \ \allowbreak \min_{j \in [n]}  v_i(X_j). \] 

We will also consider the following generalization of this quantity, with any number of agents $k \in \mathbb{Z}_+$ and subset of goods $S \subseteq [m]$ 
\begin{align}
\label{eq:mms-defn} 
\mu_i(k, S) & \coloneqq \max_{(Y_1, \ldots, Y_k) \in \Pi_k(S)} \ \min_{j \in [k]}  v_i(Y_j)
\end{align}
Note that $\mu_i = \mu_i(n, [m])$.
\begin{definition}[$\MMS$] An allocation $\alloc = (A_1, \ldots, A_n)$ is said to be a \emph{maximin share allocation} iff $v_i(A_i) \geq \mu_i$ for all agents $i \in [n]$.
\end{definition}

\begin{definition}[$\PMMS$] An allocation (partial or complete) $\palloc = (P_1, \ldots, P_n)$ is said to be a \emph{pairwise maximin share allocation} iff $v_i(P_i) \geq \mu_i(2, P_i \cup P_j)$ for all agents $i, j \in [n]$.
\end{definition}

As mentioned previously, the $\PMMS$ allocations computed by our algorithm might not be complete, but they maximize social welfare (across all allocations, partial and complete) and, hence, are Pareto efficient. Therefore, we obtain the $\PMMS$ guarantee without any loss in economic efficiency.

\begin{definition}[$\EFone$] An allocation (partial or complete) $\mathcal{A}$ is said to be envy-free up to one good ($\EFone$) iff for all agents $i, j \in [n]$, with $A_j \neq \emptyset$, there exists a good $g \in A_j$ such that $v_i(A_i) \geq v_i(A_j \setminus \{ g \})$. 
\end{definition}

While the current work primarily focuses on achieving $\MMS$ and $\PMMS$ guarantees exactly, one requires the following approximate versions of these notions when considering connections between different fairness concepts. For parameter $\alpha \in [0,1]$, an allocation $(A_1, \ldots, A_n)$ is said to be an $\alpha$-approximate maximin share ($\alpha$-$\MMS$) allocation iff $v_i(A_i) \geq \alpha \ \mu_i$ for all $i \in [n]$. Similarly, an $\alpha$-$\PMMS$ allocation (complete or partial) $(P_1, \ldots, P_n)$ is one that satisfies $v_i(P_i) \geq \alpha \ \mu_i(2, P_i \cup P_j)$ for all $i, j \in [n]$.

\paragraph{Matroids and rank functions.} The current work addresses settings in which the valuation of each agent $i \in [n]$ is a rank function of a matroid $\mathcal{M}_i = ([m], \mathcal{I}_i)$. Recall that a pair $([m], \mathcal{I})$ is called a matroid iff $\mathcal{I}$ is a nonempty collection of subsets of $[m]$ that satisfies (i) Hereditary property: if $I \in \mathcal{I}$ and $J \subseteq I$, then $J \in \mathcal{I}$, and (ii) Augmentation property: if $I, J \in \mathcal{I}$ and $|J| < |I|$, then there exists $g \in I \setminus J$ such that $J + g \in \mathcal{I}$. Given a matroid $\mathcal{M} = ([m], \mathcal{I})$, a subset $S \subseteq [m]$ is said to be \emph{independent} iff $S \in \mathcal{I}$. 

The rank function $r: 2^{[m]} \mapsto \mathbb{Z}_+$ of a matroid $\mathcal{M} = ([m], \mathcal{I})$ captures, for each subset $X \subseteq [m]$, the size of the largest (cardinality wise) independent subset within $X$; formally, 
\begin{align*}
r(X) & \coloneqq \max\{|I| \ : \  I \subseteq X \text{ and } I \in \mathcal{I}\}.
\end{align*}
Note that rank functions, by definition, are nonnegative ($r(X) \geq 0$ for all $X \subseteq [m]$) and monotone ($r(X) \leq r(Y)$ for all $X \subseteq Y$). Also, the following characterization is well known~\cite[Chapter~39]{schrijver2003combinatorial}: every submodular function $r$ with binary marginals\footnote{That is, $r(A + g) - r(A) \in \{0,1\}$, for all $A\subseteq [m]$ and $g \in [m]$.} is in fact a matroid-rank function. 

As mentioned previously, we will focus on fair division instances $\langle [m], [n], \{v_i\}_{i=1}^n \rangle$ in which, for each agent $i$, the valuation $v_i$ is the rank function of a matroid $\mathcal{M}_i = ( [m], \mathcal{I}_i )$. Hence, for any subset of goods $S \subseteq [m]$, we have $ v_i(S) \leq |S|$ and equality holds here iff $S$ is an independent set in $\mathcal{M}_i$, i.e., $S \in \mathcal{I}_i$. We will assume, throughout, that the valuations are specified via an oracle that answers value-queries: given any subset $S \subseteq [m]$ and an agent $i$, the oracle returns $v_i(S) \in \mathbb{R}_+$. That is, all of our algorithmic results hold in the basic value-oracle model and do not require an explicit description of the underlying matroids. 

\paragraph{Matroid union.} In matroid theory, the union operation enables one to construct a new matroid by combining independent sets of old ones. In particular, if $\mathcal{M}_1 = ([m], \mathcal{I}_1),\ldots, \mathcal{M}_n=([m], \mathcal{I}_n)$ are matroids, then their \emph{union} \[\umat \coloneqq \big( [m], \ \left\{ I_1 \cup \ldots \cup I_n : I_i \in \mathcal{I}_i \text{ for all } i \in [n] \right\} \big)\] is a matroid as well~\cite[Chapter~42]{schrijver2003combinatorial}. Write $\urank$ to denote the rank function of $\umat$. Note that a subset of goods $S\subseteq [m]$ is independent with respect to  $\umat$ (i.e., $\urank(S) =|S|$) iff $S$ admits an $n$-partition, $S_1, S_2, \ldots, S_n$, with the property that $S_i \in \mathcal{I}_i$, for all $i \in [n]$. Equivalently, a subset $S$ is independent in $\umat$ iff we can partition it among the agents and generate social welfare equal to $|S|$. {Recall that $S_i \in \mathcal{I}_i$ iff $v_i(S_i) = |S_i|$.} 

The next equation provides a direct connection between the rank function $\urank$ and the optimal social welfare in fair division instances $\langle [m], [n], \{v_i\}_{i=1}^n \rangle$ with matroid-rank valuations
\begin{align}
\label{eq:rank-sw}
\urank([m]) = \max_{\alloc \in \Pi_n([m])} \ \SW(\alloc)
\end{align}
To establish equation (\ref{eq:rank-sw}), note that given any (social-welfare maximizing) allocation $\alloc=(A_1, \ldots, A_n)$, for each $i$, there exists a subset $A'_i \subseteq A_i$ such that $v_i(A_i) = |A'_i|$ and $A'_i \in \mathcal{I}_i$; in particular, $A'_i$ is a maximum-size independent subset of $A_i$. For the partial allocation $\mathcal{A}' = (A'_1, \ldots, A'_n)$, we have $\SW(\alloc) = \SW(\alloc') = |\cup_{i=1}^n A'_i|$ and  $\cup_{i=1}^n A'_i$ is an independent set in $\umat$. Hence, the rank $\urank([m])$ is at least as much as the optimal social welfare. The reverse inequality follows from the fact that any (maximum-size) independent subset $P$ of $\umat$ admits a partition/partial allocation $\palloc = (P_1, \ldots, P_n)$ with the property that $\SW(\palloc) = \sum_{i=1}^n v_i(P_i) = \sum_{i=1}^n |P_i| = |P| = \urank(P)$.   

A classic result of Edmonds~\cite[Chapter~42.3]{schrijver2003combinatorial} shows that a maximum-size independent subset in $\umat$ (in particular, of size $\urank([m])$) can be computed in polynomial time. Using this \emph{matroid union algorithm}, we can efficiently find an allocation (partial or complete) $\alloc=(A_1, \ldots, A_n) \in \mathcal{I}_1 \times \ldots \times \mathcal{I}_n$ that maximizes social welfare in the given fair division instance. Note that, $\alloc$ might not be a complete allocation, but it maximizes social welfare among all allocations.

In the matroid union context, a central notion is that of an \emph{exchange graph}. Our results use this construct, along with the augmentation operation. For a partial allocation $\alloc = (A_1, \allowbreak \ldots, A_n) \in \mathcal{I}_1 \times \ldots \times \mathcal{I}_n$, comprised of independent sets, the \emph{exchange graph} $\mathcal{G}(\alloc)$ is a directed graph with vertex set as $[m]$ and it includes, for all $i \in [n]$, directed edge $(g, g') \in A_i \times \left([m] \setminus A_i\right)$ iff $A_i - g + g' \in \mathcal{I}_i$. That is, directed edge $(g, g') \in A_i \times \left([m] \setminus A_i\right)$ is included in $\mathcal{G}(\alloc)$ iff swapping along it maintains independence for $i$. 

Finally, we define \emph{path augmentation}: for a directed path $P =(g_1, g_2, \ldots, g_k)$ in the exchange graph $\mathcal{G}(\alloc)$ write
\begin{align*}
A_i \Delta P \coloneqq A_i \ \Delta \ \{ g_j , g_{j+1} : g_j \in A_i \} \qquad \text{for all $i \in [n]$.}
\end{align*}
That is, the set $A_i \Delta P$ is obtained by swapping along every directed edge $(g_j, g_{j+1})$ in $P$ that goes out of the set $A_i$. Recall that, for any two subsets $X$ and $Y$, the symmetric difference $X \Delta Y \coloneqq (X \setminus Y) \cup (Y \setminus X)$.

For any agent $i \in [n]$ and independent set $A_i \in \mathcal{I}_i$, we will denote by $F_i(A_i)$ the set of goods that can be included in $A_i$ while maintaining independence, $F_i(A_i) \coloneqq \{ g \in [m] \setminus A_i : A_i + g \in \mathcal{I}_i \}$. 

We will use the following known result  (stated in our notation). In particular, Lemma~\ref{lemma:path-augmentation} asserts that augmenting along shortest paths in the exchange graph maintains independence.\footnote{Following standard terminology, $P$ is said to be a shortest path between two vertex sets $V\subseteq [m]$ and $V' \subseteq [m]$ iff it has the fewest number of edges among all paths that connect any vertex in $V$ to any vertex in $V'$.} For completeness, we provide a proof of this result in Appendix~\ref{appendix:proof-of-path-augmentation}.

\begin{restatable}[\cite{schrijver2003combinatorial}]{lemma}{LemmaPathAugmentation}
\label{lemma:path-augmentation}
Let $\alloc = (A_1, \ldots, A_n) \in \mathcal{I}_1 \times \ldots \times \mathcal{I}_n$ be a (partial) allocation comprised of independent sets and, for agents $i \neq j$, let $P = (g_1, g_2, \ldots, g_t)$ be a shortest path in the exchange graph $\mathcal{G}(\alloc)$ between $F_i(A_i)$ and $A_j$ (in particular, $g_1 \in F_i(A_i)$ and $g_t \in A_j$). Then, for all $k \in [n] \setminus \{i, j\}$, we have $A_k \Delta P \in \mathcal{I}_k$ along with $\left( A_i \Delta P \right) + g_1 \in \mathcal{I}_i$ and $A_j - g_t \in \mathcal{I}_j$.
\end{restatable}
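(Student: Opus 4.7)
The plan is to invoke the classical unique-matching property of bipartite exchange graphs in matroids, together with the shortest-path hypothesis on $P$, to verify independence for each agent in turn.

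The claim $A_j - g_t \in \mathcal{I}_j$ is immediate from the hereditary property of $\mathcal{I}_j$. For any other agent $k$, let $E_k \coloneqq \{\ell \in [t-1] : g_\ell \in A_k\}$ and set $X_k \coloneqq \{g_\ell : \ell \in E_k\} \subseteq A_k$ and $Y_k \coloneqq \{g_{\ell+1} : \ell \in E_k\}$. Because every edge $(g_\ell, g_{\ell+1})$ of $\mathcal{G}(\alloc)$ with $g_\ell \in A_k$ forces $g_{\ell+1} \notin A_k$, and because two consecutive path-edges cannot both be owned by $k$ (otherwise some $g_{\ell+1}$ would simultaneously lie in and out of $A_k$), the sets $X_k$ and $Y_k$ are disjoint with $|X_k| = |Y_k|$. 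Hence $A_k \Delta P = (A_k \setminus X_k) \cup Y_k$, and the required independence claims reduce to showing $(A_k \setminus X_k) \cup Y_k \in \mathcal{I}_k$ (with an extra $+\, g_1$ in the case $k = i$, starting from $A_i + g_1 \in \mathcal{I}_i$, which is independent because $g_1 \in F_i(A_i)$).

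To establish the swap, I would invoke the standard matroid fact that if the bipartite exchange graph on $X_k \cup Y_k$ (with edges $(x, y)$ satisfying $A_k - x + y \in \mathcal{I}_k$) has a unique perfect matching, then $(A_k \setminus X_k) \cup Y_k \in \mathcal{I}_k$. The canonical matching $M \coloneqq \{(g_\ell, g_{\ell+1}) : \ell \in E_k\}$ is perfect by construction. For uniqueness, observe that since $P$ is a shortest $F_i(A_i)$-to-$A_j$ path in $\mathcal{G}(\alloc)$, every prefix $(g_1, \ldots, g_\ell)$ is itself a shortest $F_i(A_i)$-to-$g_\ell$ path, so each $g_\ell$ lies at BFS-distance exactly $\ell - 1$ from $F_i(A_i)$; the usual BFS inequality then gives $d(v) \le d(u) + 1$ for every edge $(u,v)$ of $\mathcal{G}(\alloc)$. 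Any alternating cycle for $M$ in the bipartite exchange graph traverses distinct vertices $g_{\ell_1}, g_{\ell_1+1}, g_{\ell_2}, g_{\ell_2+1}, \ldots, g_{\ell_s}, g_{\ell_s+1}$ (with $s \ge 2$) and uses extra exchange-graph edges of the form $(g_{\ell_{i+1}}, g_{\ell_i+1})$; applying the BFS inequality to each such edge yields $\ell_i \le \ell_{i+1}$ for every $i$ (cyclically), so summing around the cycle forces $\ell_1 = \ell_2 = \cdots$, contradicting distinctness. Hence $M$ is the unique perfect matching.

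For agent $i$ the same argument applies to the independent set $A_i + g_1$ and yields $(A_i \Delta P) + g_1 \in \mathcal{I}_i$. The step I expect to require the most care is the uniqueness argument: one must verify that every alternating cycle really does induce genuine exchange-graph edges of the form $(g_{\ell_{i+1}}, g_{\ell_i+1})$ (so that the BFS inequality is applicable), and that the non-adjacency property of edges owned by a common agent guarantees the cycle consists of distinct path vertices. Once these combinatorial checks are in place, the lemma follows directly from the matroid unique-matching theorem.
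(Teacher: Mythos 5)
Your treatment of agents $k \notin \{i,j\}$ and of agent $j$ is sound and follows the same route as the paper: reduce each swap to the unique-perfect-matching criterion (Theorem~39.13 of Schrijver, stated as Lemma~\ref{lemma:perfect-matching1} in the appendix) and certify uniqueness of the canonical matching via the shortest-path/BFS-distance argument. Your uniqueness argument is in fact spelled out in more detail than in the paper's own proof, and it is correct.

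The gap is in the case $k = i$. You assert that ``the same argument applies to the independent set $A_i + g_1$,'' but it does not: the path edges owned by agent $i$ certify $A_i - g_\ell + g_{\ell+1} \in \mathcal{I}_i$, i.e., they are edges of $\mathcal{D}_{\mathcal{M}_i}(A_i)$, not of $\mathcal{D}_{\mathcal{M}_i}(A_i + g_1)$. In general, $A_i - g_\ell + g_{\ell+1} \in \mathcal{I}_i$ together with $A_i + g_1 \in \mathcal{I}_i$ does \emph{not} imply $(A_i + g_1) - g_\ell + g_{\ell+1} \in \mathcal{I}_i$ (for instance, $g_{\ell+1}$ could be parallel to $g_1$ in $\mathcal{M}_i$), so the canonical matching need not even be present in the exchange graph of $A_i + g_1$, and the unique-matching lemma cannot be invoked for that base set. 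The paper instead handles agent $i$ with the stronger Corollary~39.13a (Lemma~\ref{lemma:perfect-matching2}): one first applies the unique-matching condition to $B_i \coloneqq A_i \Delta P$ relative to $A_i$ itself, and then adjoins $g_1$ using the additional hypothesis $v_i(A_i \cup B_i) = |A_i|$. That hypothesis is where the shortest-path assumption is used a third time, in a way your proof omits: since $P$ is a shortest path starting from the \emph{set} $F_i(A_i)$, no vertex $g_{\ell+1} \in B_i \setminus A_i$ can lie in $F_i(A_i)$ (otherwise the suffix of $P$ beginning at $g_{\ell+1}$ would be a shorter $F_i(A_i)$--$A_j$ path), hence $A_i + b \notin \mathcal{I}_i$ for every $b \in B_i \setminus A_i$, which yields $v_i(A_i \cup B_i) = |A_i|$. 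Without establishing this spanning condition and invoking the refined lemma, the step $\left(A_i \Delta P\right) + g_1 \in \mathcal{I}_i$ is unjustified as written.
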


As a direct consequence of this lemma, we get that by augmenting along a shortest path, between $F_i(A_i)$ and $A_j$, one obtains a new (partial) allocation in which the valuation of agent $i$ increases by one and that of $j$ decreases by one. The valuations of all other agents remain unchanged. In particular, the social welfare remains unchanged after a path augmentation of the type identified in Lemma~\ref{lemma:path-augmentation}. 

The matroid union theorem \cite[Corollary~42.1a]{schrijver2003combinatorial} is stated next (in our notation). This result provides a convolution formula for the rank function $\urank$ of the union matroid $\umat$.

\begin{lemma}[\cite{schrijver2003combinatorial}]
\label{lemma:matroid-union-theorem}
Let $\umat$ be the union of matroids $\mathcal{M}_1, \ldots, \mathcal{M}_n$ with rank functions $v_1, \ldots, v_n$, respectively. Then, the rank function $\urank$ of $\umat$ satisfies 
\begin{align*}
\urank(S) = \min_{T \subseteq S} \ \left( |S \setminus T| + \sum_{i=1}^n v_i(T) \right) \qquad \text{ for all $S \subseteq [m]$.}
\end{align*}
\end{lemma}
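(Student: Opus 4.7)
The plan is to prove the two inequalities separately, with the lower bound $\geq$ being the main work. For the easy upper bound, I would take any maximum-size union-independent subset $I \subseteq S$ in $\umat$ with decomposition $I = I_1 \cup \ldots \cup I_n$ (each $I_i \in \mathcal{I}_i$), and for an arbitrary $T \subseteq S$ write $|I| = |I \cap T| + |I \setminus T| \leq \sum_i |I_i \cap T| + |S \setminus T| \leq \sum_i v_i(T) + |S \setminus T|$, using that each $I_i \cap T$ is an $\mathcal{M}_i$-independent subset of $T$. Minimizing over $T$ gives $\urank(S) \leq \min_{T \subseteq S}\bigl(|S \setminus T| + \sum_i v_i(T)\bigr)$.

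For the matching lower bound I would exhibit a specific witness $T^*$. Fix the optimal $I$ and decomposition $\alloc = (I_1, \ldots, I_n)$ as above, let $U := S \setminus I$, set $F := \bigcup_i F_i(I_i)$, and define $R \subseteq S$ to be the forward closure of $F$ inside $S$ under directed edges of the exchange graph $\mathcal{G}(\alloc)$, with $F$ itself included. Put $T^* := S \setminus R$. Intuitively, $R$ captures the vertices that a matroid-union augmenting-path search would be able to reach before getting stuck at the optimum.

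Two sub-claims then deliver the bound. The first is $R \cap U = \emptyset$: if some $u \in U$ were reachable from a vertex of $F$ in $\mathcal{G}(\alloc)$, then a shortest such directed path, together with the insertion of its initial vertex $g_1 \in F$ into the corresponding $I_i$, would perform a path augmentation in the spirit of Lemma~\ref{lemma:path-augmentation}---but with the terminal edge pushing $u$ into the last bundle it touches---raising $|I|$ by one and contradicting maximality. The second is $v_i(S \setminus R) = |I_i \setminus R|$ for every $i$. Here the inequality $\geq$ is immediate from the $\mathcal{M}_i$-independence of $I_i \setminus R$; for the reverse, suppose some $g \in (S \setminus R) \setminus I_i$ satisfied $(I_i \setminus R) + g \in \mathcal{I}_i$. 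If $I_i + g \in \mathcal{I}_i$ also holds, then $g \in F \subseteq R$, a contradiction. Otherwise, $I_i + g$ contains a unique circuit $C$, and for every $a \in C \cap I_i$ the swap $I_i - a + g \in \mathcal{I}_i$ produces an edge $(a, g) \in \mathcal{G}(\alloc)$; since $(I_i \setminus R) + g$ is independent, not all of $C \cap I_i$ can lie in $I_i \setminus R$, so some $a \in C \cap I_i$ already lies in $R$, whence $g \in R$ via the edge $(a, g)$---again a contradiction.

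Combining, $R \cap U = \emptyset$ gives $R \subseteq I$, and the second claim gives $\sum_i v_i(T^*) = \sum_i |I_i \setminus R| = |I \setminus R|$, so $|S \setminus T^*| + \sum_i v_i(T^*) = |R| + |I \setminus R| = |I| = \urank(S)$, which is exactly the needed lower bound. The main obstacle is the second sub-claim: it is where the matroid circuit axiom---one element added to an independent set creates at most one circuit, any of whose old elements can be exchanged out---is needed to convert the graph-reachability property of $R$ into the sharp rank identity that makes $T^*$ tight.
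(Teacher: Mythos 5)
The paper does not actually prove Lemma~\ref{lemma:matroid-union-theorem}; it is imported verbatim from Schrijver~\cite[Corollary~42.1a]{schrijver2003combinatorial} with no argument given, so there is nothing internal to compare against. Judged on its own, your proof is correct and is essentially the standard derivation of the Nash--Williams/Edmonds formula from the analysis of the matroid union algorithm: the counting upper bound via $|I| = |I\cap T| + |I\setminus T| \leq \sum_i v_i(T) + |S\setminus T|$ is right, and your witness $T^* = S\setminus R$ built from the reachable set of $\bigcup_i F_i(I_i)$ in the exchange graph is the tight set that Schrijver's proof also extracts. Both sub-claims check out: $R\cap U=\emptyset$ is the contrapositive of the augmenting-path characterization (Lemma~\ref{lemma:sw-gap}), and the fundamental-circuit argument correctly shows $I_i\setminus R$ is a maximal independent subset of $S\setminus R$, since $C - g \subseteq I_i\setminus R$ would make $(I_i\setminus R)+g$ dependent. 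Two small points you should make explicit: (i) take the decomposition $I = I_1\cup\ldots\cup I_n$ with the $I_i$ pairwise disjoint (always possible by the hereditary property), since the final step $\sum_i |I_i\setminus R| = |I\setminus R|$ and the paper's definition of the exchange graph both require it; and (ii) state that $F$ and the exchange graph are restricted to the ground set $S$ (equivalently, work in the matroids restricted to $S$), so that $R\subseteq S$ and $R\subseteq I$ genuinely follow. Neither is a gap in the idea, just in the bookkeeping.
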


\section{Maximin Share Guarantee for Matroid-Rank Valuations}
\label{section:mms}

This section develops a polynomial-time algorithm (Algorithm \ref{algo:mms}) for computing an $\MMS$ allocation that also maximizes social welfare. 

\floatname{algorithm}{Algorithm}
\begin{algorithm}[ht]
\caption{\textsc{AlgMMS}} \label{algo:mms}
\textbf{Input:} Fair division instance  $\langle [m],[n], \{v_i\}_i \rangle$ with value-oracle access to the matroid-rank valuations $v_i$s. \\
\textbf{Output:} Social welfare-maximizing and maximin share allocation $\mathcal{A} = (A_1, \ldots, A_n)$ 

  \begin{algorithmic}[1]
		\STATE Compute a social welfare maximizing (partial) allocation $\alloc = (A_1, \ldots, A_n) \in \mathcal{I}_1 \times \dots \times \mathcal{I}_n$   %
	    \STATE Initialize ${S}_{<} = \{i \in [n] \ : \ v_i(A_i) < \mu_i(n,[m])\}$ and  ${S}_{>} = \{i \in [n] \ : \ v_i(A_i) > \mu_i(n,[m])\}$ 
		\WHILE{${S}_< \neq \emptyset$} 
			\STATE Select any agent $i \in S_<$ and set $P =(g_1, \ldots, g_t)$ to be a shortest path from $F_i(A_i)$  to $ \bigcup_{j \in S_>} \ A_j$ \COMMENT{In particular, $g_1 \in F_i(A_i)$, with $i \in S_<$, and $g_t \in A_j$, with $j \in S_>$} \label{line:path}
			\STATE Update $A_k \gets A_k \Delta P$ for all $k \in [n] \setminus \{i,j\}$ \label{line:augment1}
			\STATE Update $A_i \gets (A_i \Delta P ) + g_1$ and $A_j \gets A_j - g_t$ \label{line:augment2}
			
			\STATE Set ${S}_{<} = \{i \in [n] \ : \ v_i(A_i) < \mu_i(n,[m])\}$ and  ${S}_{>} = \{i \in [n] \ : \ v_i(A_i) > \mu_i(n,[m])\}$
			\ENDWHILE
		\STATE \textbf{return } $\alloc^* = \left( A_1 \cup \left([m] \setminus \cup_{i=1}^n A_i \right), A_2, \ldots, A_n \right)$ \label{line:return}
		\end{algorithmic}
\end{algorithm}

Algorithm \ref{algo:mms} starts with a (partial) allocation $\alloc = (A_1, \ldots, A_n) \in \mathcal{I}_1 \times \dots \times \mathcal{I}_n$ that maximizes social welfare. As mentioned previously, a welfare-maximizing partial allocation (with independent $A_i$s) corresponds to a maximum-size independent set in $\umat$ (see equation \ref{eq:rank-sw}) and, hence, can be computed via the matroid union algorithm~\cite[Chapter~42.3]{schrijver2003combinatorial} using value-oracle access to the matroid rank-functions $v_i$s. Note that we are given oracle access to $v_i$s and not directly an oracle for the rank function of $\umat$.    

The matroid union algorithm also enables us to compute the maximin share, $\mu_i = \mu_i(n, [m])$, of each agent $i$ in polynomial time; see Appendix \ref{appendix:compute-mms}. With $\mu_i$s in hand, the algorithm iteratively updates $\alloc$---by path augmentation---till it becomes a maximin share allocation. Augmentation via shortest paths ensures that the social-welfare optimality of $\alloc$ and the independence of the bundles ($A_i \in \mathcal{I}_i$ for all $i$) is maintained throughout. 

At any point during the execution of Algorithm \ref{algo:mms}, with allocation $\alloc=(A_1,\ldots, A_n)$ in hand, we consider the set of agents whose current value is less than their maximin share, $S_< = \{i \in [n] \ : \ v_i(A_i) < \mu_i(n,[m])\}$ and the set of agents whose current value is more than their maximin share ${S}_{>} = \{i \in [n] \ : \ v_i(A_i) > \mu_i(n,[m])\}$. Augmenting along a shortest path $P$ in the exchange graph we increase the value of some agent $i \in S_<$ by one, at the cost of decreasing  the value of some agent $j \in S_>$ by one. The values of all other agents remain unchanged. We will show that, if $\alloc$ is not an $\MMS$ allocation, then such a path $P$ necessarily exists and after a polynomial number of iterations the while loop (that performs path augmentations) will terminate with an $\MMS$ allocation.

The following key lemma relies on an interesting invocation of the matroid union theorem (Lemma \ref{lemma:matroid-union-theorem}). The lemma asserts that, under matroid-rank valuations,  the sum of the maximin shares is upper bounded by the optimal social welfare. 
 
We will need the following constructs to establish the lemma at the required level of generality. Recall that, for agent $i \in [n]$, the valuation function $v_i$ is the rank function of matroid $\mathcal{M}_i =([m], \mathcal{I}_i)$. For $k \in \mathbb{Z}_+$, write $\mathcal{M}_{i \times k}$ to denote the $k$-fold union of $\mathcal{M}_i$, i.e., 
$\mathcal{M}_{i \times k} \coloneqq \left( [m], \{J_1 \cup \ldots \cup J_k : J_t \in \mathcal{I}_i \text{ for all } t \in [k]  \} \right)$.

Hence, $S\subseteq[m]$ is independent with respect to $\mathcal{M}_{i \times k}$ iff $S$ can be partitioned into $k$ subsets all of which are independent in $\mathcal{M}_i$. Write $r_{i \times k}$ to denote the rank function of $\mathcal{M}_{i \times k}$.

Also, for any subset of agents $B \subseteq [n]$, write $\mathcal{M}_B$ to denote the union of matroids $\{ \mathcal{M}_i \}_{i \in B}$, and $r_B$ to denote the rank function of $\mathcal{M}_B$. 
Note that $\mathcal{M}_{[n]} = \umat$ and the rank function $\urank$ is same as $r_{[n]}$. 

\begin{lemma} \label{lma:mms2}
For any subset of agents $B \subseteq [n]$ and subset of goods $S \subseteq [m]$ we have 
\begin{align*}
\sum_{i \in B} \mu_i(|B|,S)  & \leq r_B(S).
\end{align*}
\end{lemma}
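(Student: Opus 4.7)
The plan is to prove a per-agent inequality and then invoke the matroid union theorem (Lemma~\ref{lemma:matroid-union-theorem}) to convert it into the stated bound. Concretely, I will show that for every agent $i \in B$ and every subset $T \subseteq S$,
\begin{align*}
\mu_i(|B|, S) \ \leq \ v_i(T) + \frac{|S \setminus T|}{|B|}.
\end{align*}
Summing this over $i \in B$ gives $\sum_{i \in B} \mu_i(|B|,S) \leq \sum_{i \in B} v_i(T) + |S \setminus T|$, which holds for every $T \subseteq S$. Minimizing the right-hand side over $T$ and applying Lemma~\ref{lemma:matroid-union-theorem} to the union matroid $\mathcal{M}_B$ immediately yields $\sum_{i \in B} \mu_i(|B|,S) \leq r_B(S)$, which is the desired conclusion.

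To establish the per-agent inequality, I would fix $i \in B$ and an arbitrary $|B|$-partition $(Y_1, \ldots, Y_{|B|}) \in \Pi_{|B|}(S)$, and bound $\min_j v_i(Y_j)$ by an average. For each $j$, since $v_i$ is monotone submodular with $v_i(\emptyset)=0$ (hence subadditive), one has $v_i(Y_j) \leq v_i(Y_j \cap T) + v_i(Y_j \setminus T) \leq v_i(T) + |Y_j \setminus T|$, where the second inequality uses monotonicity of $v_i$ and the fact that rank functions satisfy $v_i(X) \leq |X|$. Averaging over $j \in [|B|]$,
\begin{align*}
\min_{j} v_i(Y_j) \ \leq \ \frac{1}{|B|} \sum_{j=1}^{|B|} v_i(Y_j) \ \leq \ v_i(T) + \frac{1}{|B|} \sum_{j=1}^{|B|} |Y_j \setminus T| \ = \ v_i(T) + \frac{|S \setminus T|}{|B|},
\end{align*}
where the last equality uses the fact that the $Y_j$s partition $S$. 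Taking the maximum over all $|B|$-partitions of $S$ on the left-hand side gives the required bound on $\mu_i(|B|, S)$.

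I do not expect any step here to be a serious obstacle: the only ingredients are subadditivity of $v_i$ (from monotone submodularity with $v_i(\emptyset) = 0$), the rank-function bound $v_i(X) \leq |X|$, and the convolution formula of Lemma~\ref{lemma:matroid-union-theorem}. The small conceptual point worth highlighting in the writeup is that the ``$\frac{1}{|B|}$'' factor introduced by averaging exactly cancels the summation over $B$ agents, so that the per-agent bound scales to produce $|S\setminus T|$ (rather than $|B| \cdot |S\setminus T|$) after summing—this is precisely what makes the matroid union formula applicable.
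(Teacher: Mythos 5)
Your proof is correct, and it reaches the same intermediate inequality as the paper --- namely $\sum_{i \in B} \mu_i(|B|,S) \leq \min_{T \subseteq S}\left(|S \setminus T| + \sum_{i \in B} v_i(T)\right)$, which the matroid union theorem identifies with $r_B(S)$ --- but you get to the per-agent bound $\mu_i(|B|,S) \leq v_i(T) + |S\setminus T|/|B|$ by a genuinely different and more elementary route. The paper derives this bound by introducing the $|B|$-fold union matroid $\mathcal{M}_{i \times |B|}$, applying the matroid union theorem a \emph{second} time to get the convolution formula $r_{i\times |B|}(S) = \min_T(|S\setminus T| + |B|\cdot v_i(T))$, and combining it with the separate fact (Proposition~\ref{prop:mms}) that $\mu_i(|B|,S) \leq r_{i\times |B|}(S)/|B|$. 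You instead bound each block of an arbitrary $|B|$-partition directly via $v_i(Y_j) \leq v_i(Y_j\cap T) + v_i(Y_j\setminus T) \leq v_i(T) + |Y_j\setminus T|$ and average; all steps check out (subadditivity follows from submodularity plus nonnegativity, and $v_i(X)\le |X|$ holds for rank functions). What your approach buys is that the per-agent inequality needs no matroid structure at all --- only monotonicity, subadditivity, and the bound $v_i(X)\le|X|$ --- so it would apply verbatim to, say, binary XOS valuations; the matroid hypothesis is then used exactly once, to convert the convolution expression into $r_B(S)$. What the paper's route buys is uniformity: it reuses the same matroid-union machinery ($\mathcal{M}_{i\times k}$, Proposition~\ref{prop:mms}) that is already needed elsewhere, e.g., in the polynomial-time computation of maximin shares in Appendix~\ref{appendix:compute-mms}.
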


Along the lines of equation (\ref{eq:rank-sw}), we have that $r_B(S)$ is equal to the maximum social welfare that one can achieve by partitioning the subset of goods $S$ among agents in $B$.  

Before proving the lemma we state a supporting proposition--its proof appears in Appendix \ref{appendix:proof-of-mms-prop}.

\begin{restatable}{proposition}{PropositionMMS}
\label{prop:mms}
For an agent $i \in [n]$, subset of goods $S \subseteq [m]$, and integer $k \in \mathbb{Z}_+$, the following inequality holds: $\mu_i(k, S) \leq \frac{r_{i \times k}(S)}{k}$.
\end{restatable}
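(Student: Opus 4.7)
The plan is to exhibit an explicit independent set of $\mathcal{M}_{i \times k}$ inside $S$ whose size is at least $k \cdot \mu_i(k,S)$; monotonicity of the rank function $r_{i\times k}$ then gives the bound. The construction proceeds directly from an optimal partition witnessing $\mu_i(k,S)$.

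First I would let $(Y_1^*, \ldots, Y_k^*) \in \Pi_k(S)$ be a partition achieving the maximum in the definition of $\mu_i(k,S)$, and write $\mu \coloneqq \mu_i(k,S) = \min_{j \in [k]} v_i(Y_j^*)$. Because $v_i$ is the rank function of $\mathcal{M}_i$, for every $j \in [k]$ there exists a subset $Z_j \subseteq Y_j^*$ with $Z_j \in \mathcal{I}_i$ and $|Z_j| = v_i(Y_j^*) \geq \mu$. The hereditary property of $\mathcal{M}_i$ then lets me trim each $Z_j$ down to a subset $Z_j' \subseteq Z_j$ with $Z_j' \in \mathcal{I}_i$ and $|Z_j'| = \mu$.

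Next I would assemble $T \coloneqq Z_1' \cup \cdots \cup Z_k'$. Since the $Y_j^*$ are pairwise disjoint and $Z_j' \subseteq Y_j^*$, the sets $Z_j'$ are pairwise disjoint; hence $|T| = k\mu$ and $T \subseteq S$. By the very definition of the $k$-fold matroid union $\mathcal{M}_{i \times k}$, a set is independent in $\mathcal{M}_{i \times k}$ iff it can be written as a union of $k$ independent sets of $\mathcal{M}_i$, so $T \in \mathcal{I}_{i \times k}$, yielding $r_{i \times k}(T) = |T| = k\mu$. Monotonicity of $r_{i \times k}$ then gives $r_{i \times k}(S) \geq r_{i \times k}(T) = k\mu$, which rearranges to $\mu_i(k,S) \leq r_{i \times k}(S)/k$, as required.

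There is no real obstacle here; the only mildly delicate point is making sure one distinguishes $v_i(Y_j^*)$ (which is the size of the \emph{largest} independent subset of $Y_j^*$, not the size of $Y_j^*$ itself) and uses that quantity to extract the $Z_j$'s of size at least $\mu$. Once those independent witnesses are in hand, the $k$-fold union structure of $\mathcal{M}_{i\times k}$ does all the work.
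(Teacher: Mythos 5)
Your proof is correct and follows essentially the same route as the paper: both compare $k \cdot \mu_i(k,S)$ against $r_{i \times k}(S)$ via a partition achieving the maximin share, the paper by invoking the welfare interpretation of $r_{i \times k}$ and deriving a contradiction, and you by directly constructing the disjoint independent witnesses $Z_1', \ldots, Z_k'$ whose union certifies $r_{i \times k}(S) \geq k\,\mu_i(k,S)$. Your version simply inlines, in direct form, the same fact the paper recalls.
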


\begin{proof}[of Lemma \ref{lma:mms2}]
Write $k \coloneqq |B|$. The matroid union theorem (Lemma \ref{lemma:matroid-union-theorem}) applied to the $k$-fold union of $\mathcal{M}_i$ (i.e., to $\mathcal{M}_{i \times k}$) gives us the following convolution formula for the rank function $r_{i \times k}$, for all $i \in [n]$:
\begin{align*}
r_{i \times k}(S) = \min_{T \subseteq S} \left( |S \setminus T| + \sum_{j=1}^k v_i(T) \right) = \min_{T \subseteq S} \left( |S \setminus T| + k \cdot v_i(T) \right). \end{align*}

Dividing both sides of the previous equation by $k$ and using the inequality $\mu_i(k,S) \leq \frac{r_{i \times k}(S)}{k}$ (Proposition \ref{prop:mms}), we get 
$\mu_i(k,S)  \leq \min_{T \subseteq S} \left( \frac{|S \setminus T|}{k} + v_i(T) \right)$. 

Therefore, for any subset $\widetilde{T}\subseteq S$ and all agents $i \in B$, the following upper bound holds $\mu_i(k,S) \leq \left( \frac{|S \setminus \widetilde{T}|}{k} + v_i(\widetilde{T}) \right)$. Summing over $i \in B$ gives us 
\begin{align*}
\sum_{i \in B} \mu_i(k,S) \leq \sum_{i \in B} \left( \frac{|S \setminus \widetilde{T}|}{k} + v_i( \widetilde{T}) \right) = k \cdot \frac{|S \setminus \widetilde{T}|}{k} + \sum_{i \in B} v_i(\widetilde{T}) 
\end{align*}

The previous inequality holds for all subsets $\widetilde{T} \subseteq S$. Therefore, 
\begin{align}
\sum_{i \in B} \mu_i(k,S) & \leq \min_{T \subseteq S} \left( |S \setminus T| + \sum_{i \in B} v_i(T) \right) \label{ineq:rank-conv}
\end{align}

The right-hand-side of inequality (\ref{ineq:rank-conv}) is equal to $r_B(S)$. This follows by  applying the matroid union theorem (Lemma \ref{lemma:matroid-union-theorem}) to the rank function $r_B$ of matroid $\mathcal{M}_B$ (which is the union of $\{\mathcal{M}_i \}_{i \in B}$). Hence, the lemma follows:
$\sum_{i \in B} \mu_i(|B|,S) \allowbreak \leq r_B(S)$.
\end{proof}

Lemma~\ref{lemma:path} (stated and proved below) ensures the existence of path $P$ in Line \ref{line:path} of Algorithm \ref{algo:mms}. The proof of Lemma~\ref{lemma:path} invokes the following result, which is used in the matroid union algorithm as well~\cite[Theorem~42.4]{schrijver2003combinatorial}. 

\begin{lemma}[\cite{schrijver2003combinatorial}]
\label{lemma:sw-gap}
Let $ (A_1, \ldots, A_n) \in \mathcal{I}_1 \times \ldots \times \mathcal{I}_n$ be a partial allocation comprised of independent sets, i.e., $\cup_{i=1}^n A_i$ is an independent set in $\umat$. Then, $|\cup_{i=1}^n A_i| < \urank([m])$ iff there exists a path in the exchange graph $\mathcal{G}(\alloc)$ between $\cup_{i=1}^n F_i(A_i)$ and some good $h \notin \cup_{i=1}^n A_i$ .
\end{lemma}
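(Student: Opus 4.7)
The plan is to prove both implications of this biconditional using an augmenting-path argument in the exchange graph $\mathcal{G}(\alloc)$, combined with the matroid union theorem (Lemma~\ref{lemma:matroid-union-theorem}). Throughout, write $A \coloneqq \cup_{k=1}^n A_k$, $Y \coloneqq [m] \setminus A$, $X \coloneqq \cup_{i=1}^n F_i(A_i)$, and let $U$ denote the set of vertices in $\mathcal{G}(\alloc)$ reachable from $X$.

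For the forward direction (a path from $X$ to $Y$ implies $|A| < \urank([m])$), I would apply a direct variant of Lemma~\ref{lemma:path-augmentation}. Taking a shortest path $P = (g_1, \ldots, g_t)$ with $g_1 \in F_i(A_i)$ and $g_t \in Y$, the very same augmentation argument produces a new partial allocation $\alloc'$ with $A'_k \in \mathcal{I}_k$ for every $k$. The only structural change from Lemma~\ref{lemma:path-augmentation} is that, since $g_t \notin A$, no good is removed from the terminal bundle; instead $g_t$ is simply added there. Consequently $|\cup_k A'_k| = |\cup_k A_k| + 1$, so $\cup_k A'_k$ is independent in $\umat$ and $\urank([m]) \geq |A| + 1 > |A|$.

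For the reverse direction, suppose no such path exists; then $U \cap Y = \emptyset$, so $U \subseteq A$. The key claim is that for every agent $i$, $v_i([m] \setminus U) = |A_i \setminus U|$, i.e., $A_i \setminus U$ is a maximum-size independent subset of $[m] \setminus U$ in $\mathcal{M}_i$. Granting this, I would apply Lemma~\ref{lemma:matroid-union-theorem} with $T = [m] \setminus U$ to obtain
\[
\urank([m]) \;\leq\; |[m] \setminus T| + \sum_{i=1}^n v_i(T) \;=\; |U| + \sum_{i=1}^n |A_i \setminus U| \;=\; |U| + |A \setminus U| \;=\; |A|,
\]
where the penultimate equality uses pairwise disjointness of the $A_i$'s and the last equality uses $U \subseteq A$. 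Since $A$ is independent in $\umat$, this matches the reverse bound $\urank([m]) \geq |A|$ and forces $|A| = \urank([m])$, which is the contrapositive we wanted.

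The main obstacle will be proving the key claim. Only the ``$\leq$'' direction needs work, and I would handle it by contradiction. If $v_i([m] \setminus U) > |A_i \setminus U|$, the matroid augmentation axiom produces some $g \in [m] \setminus U$ with $g \notin A_i$ and $(A_i \setminus U) + g \in \mathcal{I}_i$. If $A_i + g \in \mathcal{I}_i$, then $g \in F_i(A_i) \subseteq X \subseteq U$, contradicting $g \notin U$. Otherwise $A_i + g$ contains a unique circuit $C$ through $g$, and since $(A_i \setminus U) + g$ is independent, $C$ must intersect $A_i \cap U$ at some element $h$. The circuit property then yields $A_i - h + g \in \mathcal{I}_i$, so $(h, g)$ is an edge of $\mathcal{G}(\alloc)$; because $h \in U$ and $U$ is closed under outgoing edges, this forces $g \in U$, again contradicting $g \notin U$. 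This circuit-based extraction of an edge into $U$ is the technical heart of the argument.
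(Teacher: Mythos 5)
The paper does not actually prove this lemma: it is imported verbatim from Schrijver (Theorem~42.4) and used as a black box, so there is no in-paper proof to compare against. Your argument is correct and is essentially the standard one. The direction ``path $\Rightarrow$ $|\cup_k A_k| < \urank([m])$'' correctly mirrors the augmentation in Lemma~\ref{lemma:path-augmentation}: along a shortest path ending at an unassigned $h$, the terminal agent performs a swap that absorbs $h$ rather than deleting an element, the unique-perfect-matching lemmas preserve independence of every bundle, and the union grows by exactly one. For the converse you take the reachable set $U$ from $\cup_i F_i(A_i)$, show $U \subseteq \cup_k A_k$, and certify optimality by plugging $T = [m]\setminus U$ into the convolution formula of Lemma~\ref{lemma:matroid-union-theorem}; the key claim $v_i([m]\setminus U) = |A_i \setminus U|$ is handled properly, with the fundamental-circuit argument correctly extracting an exchange edge $(h,g)$ out of $U$ to contradict closure of $U$ under outgoing edges. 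Two cosmetic remarks: your labels ``forward'' and ``reverse'' are swapped relative to the order in the statement of the lemma (harmless, as both implications are proved), and one could note explicitly that the degenerate case of a length-zero path, i.e., $F_i(A_i)$ meeting the unassigned goods directly, is subsumed by your argument since $g_t = g_1$ is then simply added to $A_i$. Using Lemma~\ref{lemma:matroid-union-theorem} as a black box for the optimality certificate is a clean choice here, since the paper already assumes that result; Schrijver's own treatment instead derives the min--max relation and the augmenting-path characterization together.
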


\begin{restatable}{lemma}{LemmaPath}
\label{lemma:path}
Let $\alloc = (A_1, A_2, \dots , A_n) \in \mathcal{I}_1 \times \mathcal{I}_2 \times \dots \times \mathcal{I}_n$ be a social welfare-maximizing (partial) allocation comprised of independent bundles. Then, for any agent $i \in S_<$, there exists a path in the exchange graph $\mathcal{G}(\alloc)$ from $F_i(A_i)$ to $A_j$, for some $j \in S_>$.
\end{restatable}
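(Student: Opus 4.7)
The plan is to argue by contradiction. Suppose no path in $\mathcal{G}(\alloc)$ runs from $F_i(A_i)$ to $\bigcup_{j\in S_>}A_j$, and let $R$ denote the set of goods reachable from $F_i(A_i)$ in $\mathcal{G}(\alloc)$ (so $F_i(A_i)\subseteq R$). First I would use the welfare-maximality of $\alloc$ together with Lemma~\ref{lemma:sw-gap} to rule out any path from $F_i(A_i)\subseteq \bigcup_k F_k(A_k)$ to an unallocated good; combined with the hypothesis $R\cap A_j=\emptyset$ for $j\in S_>$, this yields $R\subseteq \bigcup_{k\in B_0}A_k$, where $B_0:=[n]\setminus S_>$.

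Next I would single out the ``active'' agents $B:=\{k\in[n]\,:\,A_k\cap R\neq\emptyset\}\cup\{i\}$, which satisfies $B\subseteq B_0$ and $R\subseteq\bigcup_{k\in B}A_k$. The key structural claim is that $F_k(A_k)\subseteq R$ for every $k\in B$: for $k=i$ this is by definition, and for $k\in B\setminus\{i\}$, picking any $h\in A_k\cap R$ and any $g\in F_k(A_k)$, the relation $A_k+g\in\mathcal{I}_k$ forces $A_k-h+g\in\mathcal{I}_k$, so $(h,g)$ is an edge of $\mathcal{G}(\alloc)$ and hence $g\in R$. Consequently $\bigcup_{k\in B}F_k(A_k)\subseteq R$, and since $R$ is closed under out-edges in $\mathcal{G}(\alloc)$ and contained in $\bigcup_{k\in B}A_k$, no path in $\mathcal{G}(\alloc)$ (and a fortiori none in the sub-exchange-graph $\mathcal{G}^B$ corresponding to the matroid-union instance on agents $B$) can leave $\bigcup_{k\in B}A_k$. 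Applying Lemma~\ref{lemma:sw-gap} to this restricted instance therefore gives $\sum_{k\in B}|A_k|=r_B([m])$.

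The final step is to combine this equality with Lemma~\ref{lma:mms2} applied to the pair $(B,[m])$:
\[
\sum_{k\in B}\mu_k(|B|,[m]) \ \leq \ r_B([m]) \ = \ \sum_{k\in B}|A_k|.
\]
Since $|B|\leq n$, merging the bundles of any $n$-partition witnessing $\mu_k(n,[m])$ down to $|B|$ parts preserves the minimum value by monotonicity of $v_k$, so $\mu_k(n,[m])\leq \mu_k(|B|,[m])$ for every $k\in B$. Thus $\sum_{k\in B}\mu_k\leq \sum_{k\in B}|A_k|$. But $B\cap S_>=\emptyset$ forces $|A_k|\leq\mu_k$ for every $k\in B$, with the strict inequality $|A_i|<\mu_i$ at $i\in B\cap S_<$; hence $\sum_{k\in B}|A_k|<\sum_{k\in B}\mu_k$, a contradiction.

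The main obstacle I anticipate is the inclusion $F_k(A_k)\subseteq R$ for $k\in B\setminus\{i\}$ and the accompanying transfer of reachability from $\mathcal{G}(\alloc)$ to the restricted graph $\mathcal{G}^B$, because that is exactly where a hypothesis stated only about $F_i(A_i)$ is upgraded to welfare-maximality of the entire sub-allocation $(A_k)_{k\in B}$. Once that bridge is in place, the remainder is routine bookkeeping together with the elementary monotonicity of $\mu_k(\cdot,[m])$ under coarsening the partition.
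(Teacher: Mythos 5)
Your proposal is correct and follows essentially the same route as the paper: the reachable set $R$, the active agent set $B$ with $F_k(A_k)\subseteq R\subseteq\bigcup_{k\in B}A_k$, Lemma~\ref{lemma:sw-gap} applied both to the full instance (to exclude unallocated goods from $R$) and to the restricted instance on $B$, and Lemma~\ref{lma:mms2} together with $\mu_k(n,[m])\le\mu_k(|B|,[m])$ to close the argument. The only (immaterial) difference is that you invoke the contrapositive of Lemma~\ref{lemma:sw-gap} to get $\sum_{k\in B}|A_k|=r_B([m])$ and then derive a numerical contradiction, whereas the paper derives $\SW(\alloc')<r_B([m])$ first and then uses the forward direction of Lemma~\ref{lemma:sw-gap} to produce a path contradicting the closure of $R$.
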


\begin{proof}
Recall that $F_i(A_i) \coloneqq \{ g \in [m] \setminus A_i : A_i+g \in \mathcal{I}_i\}$ and $\mu_i = \mu_i(n, [m])$. Define set $R \subseteq [m]$ to be the set of vertices (goods) reachable from the set $F_i(A_i)$ in the exchange graph $\mathcal{G}(\alloc)$. Also, write $B \subseteq [n]$ to denote the set of agents at least one of whose goods is in $R$, i.e., $ B \coloneqq \{k \in [n] \ : \ R \cap A_k \neq \emptyset \} \cup \{i\}$. We explicitly include $i$ in the set $B$.

To begin with, note that for all agents $k \in B$ we have $F_k(A_k) \subseteq R$. This is trivially true if $k = i$. Otherwise, since $k \in B$, there exists a good $g \in R \cap A_k$ ($g$ is reachable from $F_i(A_i)$) and, by definition of exchange graph, there exists an edge from $g \in A_k$ to all the goods in $F_k(A_k)$. Therefore, $F_i(A_i)$ is connected to all of $F_k(A_k)$ and we have $\cup_{k \in B} F_k(A_k) \subseteq R$. 

In addition, we have that $R$ cannot contain an unassigned good, $R \cap \left( [m] \setminus \cup_{i=1}^n A_i \right) = \emptyset$. Since otherwise $\alloc$ would not be a social welfare maximizing allocation--this follows from the reverse direction of Lemma \ref{lemma:sw-gap}. Therefore, every good $g \in R$ satisfies $g \in A_k$ for some agent $k$. Using this observation and the definition of $B$, we obtain $ R \subseteq \cup_{k \in B} A_k$.  

We will now show that there exists a path from $F_i(A_i)$ to a good in $A_j$ for some $j \in S_>$. Assume, towards a contradiction, that such a path does not exist. Equivalently, for all $k \in B$ we have $v_k(A_k) \leq \mu_k$. Note that $i \in B$ and $v_i(A_i) < \mu_i$ imply $\sum_{k \in B} v_k(A_k)  < \sum_{k \in B} \mu_k$.

Now, consider  partial allocation $\alloc' = (A_i)_{i \in B}$, i.e., $\alloc'$ is obtained by restricting $\alloc$ to the set of agents $B$. The previous inequality can be written as 
\begin{align*}
\SW(\alloc')  & < \sum_{k \in B} \mu_k  \leq \sum_{k \in B} \mu_k(|B|, [m]) \\
&  \leq r_B([m]) \tag{via Lemma~\ref{lma:mms2}}
\end{align*}
The penultimate inequality follows from the fact that increasing the number of agents reduces the maximin share: $\mu_k = \mu_k(n, [m]) \leq \mu_k(|B|, [m])$. In particular, with $n \geq |B|$, any $n$-partition of $[m]$ can be transformed into a $|B|$-partition without decreasing the minimum value of the bundles. 

Since $\SW(\alloc') < r_B([m])$, instantiating Lemma \ref{lemma:sw-gap} over the union of $\{M_i\}_{i \in B}$ (i.e., over $\mathcal{M}_B$, instead of $\umat$), we get that there exists a path $P$ in $\mathcal{G}(\alloc')$ (and, hence, in $\mathcal{G}(\alloc)$) from $\cup_{k \in B} F_k(A_k) \subseteq R$ to a good $h \notin \cup_{k \in B} A_k$. Recall that $R \subseteq \cup_{k \in B} A_k$ and, hence, $h \notin R$. 

This, however, contradicts the definition of the reachable set $R$: if there is a path from $\cup_{k \in B} F_k(A_k) \subseteq R$ to $h$, then $h$ is reachable as well. Therefore, the lemma follows. 
\end{proof}

We now establish the main result of this section, which shows that Algorithm \ref{algo:mms}, in polynomial time, finds a social welfare-maximizing allocation that also satisfies the maximin share guarantee. 

\begin{theorem}
\label{theorem:mms}
Every fair division instance, with matroid-rank valuations, admits a maximin share allocation, $\alloc$, that also maximizes social welfare. Furthermore, such an allocation $\mathcal{A}$ can be computed in polynomial time.  
\end{theorem}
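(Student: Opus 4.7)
The plan is to verify that Algorithm~\ref{algo:mms} terminates in polynomial time with an output that is both welfare-maximizing and maximin-fair. I will maintain, as a loop invariant, that $\mathcal{A}=(A_1,\ldots,A_n)\in\mathcal{I}_1\times\cdots\times\mathcal{I}_n$ and that $\SW(\mathcal{A})=\urank([m])$, i.e.\ $\mathcal{A}$ maximizes social welfare across all (partial and complete) allocations. The invariant holds at initialization, since the matroid union algorithm produces a maximum-size independent set of $\widehat{\mathcal{M}}$ partitioned according to the source matroids (equation~(\ref{eq:rank-sw})). At each iteration the invariant is preserved by Lemma~\ref{lemma:path-augmentation}: after augmenting along the shortest path $P$ from $F_i(A_i)$ to $\bigcup_{j\in S_>}A_j$, the updated bundles remain independent, $v_i(A_i)$ increases by one, $v_j(A_j)$ decreases by one (where $g_t\in A_j$), and every other $v_k(A_k)$ stays the same, so the total welfare is unchanged.

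\textbf{Existence of the augmenting path.} To guarantee that Line~\ref{line:path} is always executable when $S_<\neq\emptyset$, I would invoke Lemma~\ref{lemma:path} directly: since at every iteration $\mathcal{A}$ is a welfare-maximizing partial allocation with independent bundles, the lemma yields the required shortest path into $A_j$ for some $j\in S_>$. (It is worth noting in passing that non-emptiness of $S_>$ is also implied combinatorially by Lemma~\ref{lma:mms2}, which gives $\sum_k \mu_k\le r_{[n]}([m])=\SW(\mathcal{A})$, so a deficit on $S_<$ must be matched by a surplus on $S_>$.)

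\textbf{Polynomial termination.} The natural progress measure is
\[
\Phi(\mathcal{A})\;\coloneqq\;\sum_{k\in[n]}\bigl|\,v_k(A_k)-\mu_k\,\bigr|.
\]
In one iteration, $i\in S_<$ has $v_i(A_i)<\mu_i$ and gains one unit, so its term drops by exactly $1$; $j\in S_>$ has $v_j(A_j)>\mu_j$ and loses one unit, so its term also drops by exactly $1$; all other terms are unchanged. Hence $\Phi$ strictly decreases by $2$ in each iteration. Since $0\le\Phi\le\sum_k(v_k(A_k)+\mu_k)\le 2nm$, the while loop performs $O(nm)$ iterations. Each iteration computes a shortest path in $\mathcal{G}(\mathcal{A})$ (polynomial in $n$ and $m$, using the value oracles to build the exchange graph), the initial welfare-maximizing $\mathcal{A}$ is obtained via the matroid union algorithm in polynomial time, and each $\mu_i$ is computed in polynomial time (Appendix~\ref{appendix:compute-mms}); so the whole algorithm runs in polynomial time.

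\textbf{Correctness of the returned allocation.} Upon termination $S_<=\emptyset$, so $v_i(A_i)\ge\mu_i$ for every $i\in[n]$. The algorithm returns $\mathcal{A}^\star$, obtained from $\mathcal{A}$ by dumping the unassigned goods $U\coloneqq[m]\setminus\bigcup_i A_i$ into agent $1$'s bundle. I would argue that this operation preserves both guarantees. For the MMS guarantee: since $v_1$ is monotone, $v_1(A_1\cup U)\ge v_1(A_1)\ge\mu_1$, while every other bundle is unchanged. For welfare-maximality: because $\mathcal{A}$ already achieves $\SW(\mathcal{A})=\urank([m])$, which by equation~(\ref{eq:rank-sw}) equals the maximum social welfare over all allocations, no agent can gain value by absorbing any $g\in U$; in particular $v_1(A_1\cup U)=v_1(A_1)$, so $\SW(\mathcal{A}^\star)=\SW(\mathcal{A})$ remains optimal. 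The main subtlety in the whole argument is the termination bound---one has to notice that the potential $\Phi$ drops by a full $2$ each step, which crucially relies on integrality of the binary marginals of matroid-rank functions together with the unit-value shift produced by a single shortest-path augmentation.
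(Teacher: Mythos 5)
Your proposal is correct and follows essentially the same route as the paper: initialize via the matroid union algorithm, invoke Lemma~\ref{lemma:path} for the existence of the augmenting path, use Lemma~\ref{lemma:path-augmentation} to preserve independence and welfare while shifting one unit of value from $j\in S_>$ to $i\in S_<$, and bound the iterations by a potential argument (the paper uses $\sum_{k\in S_<}(\mu_k - v_k(A_k))$, which drops by one per iteration; your symmetric $\Phi$ dropping by two is an equivalent cosmetic variant). Your explicit check that dumping the unassigned goods into $A_1$ preserves welfare optimality is a slightly more careful rendering of a point the paper leaves implicit, but the argument is the same.
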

\begin{proof}
In Algorithm \ref{algo:mms}, we initialize $\alloc \in \mathcal{I}_1 \times \dots \times \mathcal{I}_n$ as a social welfare-maximizing allocation and execute the while loop. Note that the loop terminates only if $S_< = \emptyset$. In such a case, we have $v_i(A_i) \geq \mu_i$ for all $i \in [n]$, i.e., the allocation in hand $\alloc$ is a maximin share allocation. {Assigning the unallocated goods $[m]\setminus \left(\cup_{i=1}^n A_i \right)$ to the first agent in Line \ref{line:return} does not violate the $\MMS$ guarantee and ensures the algorithm returns a complete allocation.} 

Otherwise, if $S_< \neq \emptyset$ there exists an agent $i \in S_<$ which (by definition of $S_<$) satisfies $v_i(A_i) < \mu_i$. In this case, we can apply Lemma \ref{lemma:path} to infer that, in the exchange graph $\mathcal{G}(\alloc)$, there exists a path from $F_i(A_i)$ to $A_j$, for some $j \in S_>$. In particular, let $P$ be the selected shortest path from $g_1 \in F_i(A_i)$ to $g_t \in A_j$ considered in Line \ref{line:path}.

Given that $\alloc \in \mathcal{I}_1 \times \dots \times \mathcal{I}_n$ and $P$ is a shortest path between $F_i(A_i)$ and $A_j$, applying Lemma \ref{lemma:path-augmentation} we get a new collection of independent sets: $A'_i \coloneqq \left( A_i \Delta P \right) + g_1 \in \mathcal{I}_i$, $A'_j \coloneqq A_j - g_t \in \mathcal{I}_j$, and $A'_k \coloneqq A_k \Delta P \in \mathcal{I}_k$ for each $k \in [n] \setminus \{ i,j \}$. 

Since independence across all the bundles is maintained, we get that $v_i(A'_i) = |A'_i| = |A_i| + 1 = v_i(A_i) + 1$ along with $v_j(A'_j) = |A'_j| = |A_j| - 1 = v_j(A_j) - 1$ and $v_k(A'_k) = |A'_k| = |A_k| = v_k(A_k)$ for each $k \in [n] \setminus \{ i,j \}$. Consequently, the social welfare does not change after the path augmentation. That is, in the algorithm, after the path augmentation in Lines \ref{line:augment1} and \ref{line:augment2}, $\alloc$ continues to be a social welfare-maximizing allocation. 

Recall that $j \in S_>$, hence $v_j(A'_j) = v_j(A_j) - 1 \geq \mu_j$. Therefore, each iteration of the while loop decreases the sum $\sum_{k \in S_<} (\mu_k - v_k(A_k))$ by one. As a result, the total number of while-loop iterations is at most $nm$. That is, after at most $nm$ iterations the while loop terminates with $S_< = \emptyset$. Moreover, each computation performed in a while loop iteration, including exchange graph construction and shortest path computation requires polynomial time.

Therefore, the algorithm finds, in polynomial time, a maximin share allocation that also maximizes social welfare. Note that the guaranteed success of the algorithm implies that such an allocation always exists. 
\end{proof}

We conclude this section by showing that, in contrast to Theorem~\ref{theorem:mms}, existence of maximin share allocations is not guaranteed under two immediate generalizations of matroid-rank functions. In particular, we establish a negative result for (i) binary XOS valuations and (ii) weighted-rank valuations.

\noindent 
(i) \emph{Binary XOS valuations.} A set function $v : 2^{[m]} \mapsto \mathbb{R}_+$ is said to be binary XOS iff, there exists a family of subsets $\mathcal{F} \subseteq 2^{[m]}$, such that $v(S) \coloneqq \max_{F \in \mathcal{F}} |S \cap F|$, for all $S \subseteq [m]$. 

Note that one can express the rank function of any matroid $\mathcal{M} = ([m], \mathcal{I})$ as a binary XOS function, by setting $\mathcal{F} = \mathcal{I}$. Indeed, $\mathcal{F}$ can be exponential in size and, unlike $\mathcal{I}$, is not required to satisfy the hereditary and augmentation properties. 

\noindent 
(ii) \emph{Weighted rank valuations.} For a matroid $\mathcal{M} = ([m], \mathcal{I})$ and weight function $w: [m] \mapsto \mathbb{R}_+$ (which associates a nonnegative weight with each good $g \in [m]$), a set function $v: 2^{[m]} \mapsto \mathbb{R}_+$ is said to be a weighted rank function iff $v(S) \coloneqq \allowbreak \max_{T \subseteq S}\{ \allowbreak \sum_{t \in T} w(t) \ : \ T \in \mathcal{I}\}$, for each $S \subseteq [m]$. Note that weighted rank functions are submodular and, if $w(g) = 1$ for all $g \in [m]$, then $v$ is the rank function of $\mathcal{M}$.

The next two theorems are established by identifying fair division instances that do not admit any maximin share allocation. {In both  the cases, the identified instance has only two agents. That is, these examples rule out the existence of a complete allocation that satisfies the $\PMMS$ criterion and maximizes social welfare as well.} 

\begin{theorem} \label{thm:xos}
Maximin share allocations are not guaranteed to exist for instances in which the agents have binary XOS valuations.
\end{theorem}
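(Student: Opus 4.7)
The approach is to exhibit a two-agent instance with binary XOS valuations that admits no maximin share allocation. I would construct the following example: take $m=6$ goods, and define $v_1$ via the family $\mathcal{F}_1 = \{\{1,2,3\},\{4,5,6\}\}$ and $v_2$ via the family $\mathcal{F}_2 = \{\{1,4\},\{2,5\},\{3,6\}\}$, so that $v_i(S) = \max_{F \in \mathcal{F}_i} |S \cap F|$. The rough intuition is that $v_1$ rewards keeping one side of a coarse bipartition intact, while $v_2$ rewards picking up at least one perfect-matching edge across that bipartition---two structural preferences that cannot be simultaneously honored.

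Next I would pin down the two MMS values. Since every $F \in \mathcal{F}_1$ has size $3$, we have $v_1(S) \leq 3$ for all $S$, while the partition $(\{1,2,3\},\{4,5,6\})$ gives both bundles value exactly $3$; hence $\mu_1=3$. Similarly $v_2(S) \leq 2$ always, and the partition $(\{1,4\},\{2,3,5,6\})$ attains $2$ on both sides, so $\mu_2=2$.

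The main step is to rule out any allocation $(A_1,A_2)$ satisfying $v_1(A_1)\geq \mu_1$ and $v_2(A_2)\geq \mu_2$. Suppose $v_1(A_1)\geq 3$. Because $v_1$ is binary XOS with maximum witness size $3$, achieving value $3$ requires some $F\in\mathcal{F}_1$ to be entirely contained in $A_1$, i.e., either $\{1,2,3\}\subseteq A_1$ or $\{4,5,6\}\subseteq A_1$. Without loss of generality consider the first case; then $A_2\subseteq \{4,5,6\}$, so $A_2$ meets each of the pairs $\{1,4\}, \{2,5\}, \{3,6\}$ in at most one element, giving $v_2(A_2)\leq 1 < \mu_2$. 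The second case is symmetric. Hence no MMS allocation exists.

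The main obstacle is really just finding the right instance; once it is in place, the verification is direct. The substantive observation is that binary XOS valuations, unlike matroid-rank functions, lack the augmentation structure that would let one ``swap'' goods between bundles to repair the MMS deficit after committing agent~$1$ to a large $\mathcal{F}_1$-witness set---exactly the mechanism that drives Algorithm~\ref{algo:mms} in the matroid-rank setting.
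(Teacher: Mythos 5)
Your proof is correct and takes essentially the same approach as the paper: exhibit a two-agent instance whose two witness families pairwise intersect in a way that prevents both agents from simultaneously reaching their maximin shares, then verify the values $\mu_1,\mu_2$ and rule out every allocation by a short case analysis. The paper's example is slightly more economical (four goods, $\mathcal{F}_1=\{\{1,2\},\{3,4\}\}$ and $\mathcal{F}_2=\{\{1,3\},\{2,4\}\}$, verified via the necessary condition $\mu_1+\mu_2\le\max_{\alloc}\SW(\alloc)$), but your six-good construction and direct verification are equally valid.
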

\begin{proof}
Consider a fair division instance with two agents, $[n] = \{ 1,2 \}$ and four goods, $[m] = \{1,2,3,4\}$. Define collections $\mathcal{F}_1 \coloneqq \{ \{1,2 \}, \{3,4 \} \}$ and $\mathcal{F}_2 \coloneqq \{ \{1,3 \}, \{2,4 \} \}$. The two agents have binary XOS valuations defined by $\mathcal{F}_1$ and $\mathcal{F}_2$, respectively:  $v_1(S) \coloneqq \max_{X \in \mathcal{F}_1} |S \cap X|$ and $v_2(S) \coloneqq \max_{X \in \mathcal{F}_2} |S \cap X|$, for all $S \subseteq [m]$.

Here, $\mu_1 = \mu_2 = 2$, since each agent $i \in \{1,2\}$ can partition the set of goods, $[m]$, into two bundles of value $2$ for $i$.

Also, note that---irrespective of the valuation class---the requirement that the sum of maximin shares, $\sum_{i \in [n]} \mu_i$, is upper bounded by the optimal social welfare is a necessary condition for the existence of $\MMS$ allocations. We will next show that the current instance does not admit any allocation $\alloc$ such that $\SW(\alloc) \geq 4 = \mu_1 + \mu_2$. This violates the above-mentioned necessary condition for the existence of $\MMS$ allocations and establishes the theorem.

To show that for each allocation $\alloc$ we have $\SW(\alloc) < 4$, note that the reverse inequality ($\SW(\alloc) = v_1(A_1) + v_2(A_2) \geq 4$) can hold only if $v_1(A_1) = v_2(A_2) = 2$. This follows from the fact that both the valuation functions are bounded from above by two. Now, both agents achieve a value of two only if $A_1 \in \mathcal{F}_1$ and $A_2 \in \mathcal{F}_2$. However, by construction, for any subsets $X \in \mathcal{F}_1$ and $Y \in \mathcal{F}_2$ we have $X \cap Y \neq \emptyset$, i.e., $A_1$ and $A_2$ are not disjoint and, hence, $\alloc$ is not an allocation.
\end{proof}

\begin{theorem} \label{thm:weight-rank}
Maximin share allocations are not guaranteed to exist for instances in which the agents' valuations are weighted rank functions. 
\end{theorem}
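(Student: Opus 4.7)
The plan is to exhibit an explicit fair division instance with weighted-rank valuations that admits no $\MMS$ allocation, mirroring the proof strategy of Theorem~\ref{thm:xos}. I would construct agents' valuations $v_1, v_2$ of the form $v_i(S) = \max_{T \subseteq S,\, T \in \mathcal{I}_i} \sum_{t \in T} w_i(t)$ for a matroid $\mathcal{M}_i = ([m], \mathcal{I}_i)$ and weight vector $w_i : [m] \to \mathbb{R}_+$, so that the sum $\mu_1 + \mu_2$ of the individual maximin shares strictly exceeds the maximum social welfare $\max_{(A,B)} \bigl( v_1(A) + v_2(B) \bigr)$ attainable in the instance. Since any putative $\MMS$ allocation $(A,B)$ would contribute social welfare at least $\mu_1 + \mu_2$, this strict inequality immediately precludes the existence of an $\MMS$ allocation. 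This is precisely the obstruction exploited in Theorem~\ref{thm:xos}, where the sum of maximin shares was $4$ but the optimal social welfare was only $3$.

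Concretely, I would search for a small ground set $[m]$ (ideally $m \leq 6$) together with a matroid whose maximum-weight independent sets, under two carefully chosen weight functions $w_1$ and $w_2$, lie in ``interlocking'' positions: each individual agent can achieve her maximin share via a partition of $[m]$ that places a heavy independent set in each part, yet the MMS-realizing partitions for the two agents are mutually incompatible. A natural template is a partition or paving matroid on $4$--$6$ goods, with weights tuned so that agent~$1$'s MMS-achieving splits place certain distinguished goods on one side while agent~$2$'s MMS-achieving splits demand the complementary placement. Once a candidate instance is in hand, verification reduces to a finite case analysis: compute $\mu_1, \mu_2$ by enumerating all $2$-partitions of $[m]$, and then confirm that for every partition $(A, B)$ at least one of $v_1(A) \geq \mu_1$ or $v_2(B) \geq \mu_2$ fails; equivalently, that $\mu_1 + \mu_2$ strictly exceeds $\max_{(A,B)} \bigl(v_1(A) + v_2(B)\bigr)$.

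The main obstacle is that, in contrast to the binary XOS valuations of Theorem~\ref{thm:xos}, every weighted-rank function is submodular. Submodularity imposes the inequality $v_i(A) + v_i(B) \geq v_i(A \cup B) + v_i(A \cap B)$, which tightly constrains how large $\mu_i$ can be relative to $v_i([m])$ and, by extension, how large $\mu_1 + \mu_2$ can be relative to the optimal social welfare. Consequently, the XOS obstruction does not transfer verbatim; the construction must genuinely exploit the flexibility that weighted rank affords beyond ordinary matroid rank---namely, non-uniform weights on a matroid designed so that the MMS-achieving partition of each agent is forced to ``waste'' the other agent's valuable goods on the wrong side of the cut. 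This is precisely the feature that distinguishes weighted-rank valuations from the matroid-rank valuations covered by Theorem~\ref{theorem:mms}, for which an $\MMS$ allocation is always guaranteed.
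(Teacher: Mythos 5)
Your proposal correctly identifies the right proof strategy---exhibit a two-agent instance in which the sum of maximin shares strictly exceeds the optimal social welfare (or, more precisely, in which the bundles realizing the two agents' maximin shares are pairwise intersecting)---and it correctly diagnoses why the binary-XOS example of Theorem~\ref{thm:xos} cannot be reused: weighted rank functions are submodular, so the obstruction must come from non-uniform weights rather than from a non-matroidal set family. This is exactly the shape of the paper's argument. However, the proposal never actually produces the instance: it says you ``would search for'' a suitable matroid and weight function and that, ``once a candidate instance is in hand,'' verification is a finite check. For a theorem whose entire content is the existence of a counterexample, the counterexample \emph{is} the proof, and describing the properties it ought to have does not establish that such an instance exists. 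This is a genuine gap, not a presentational omission---the nontrivial work is precisely in finding a matroid/weight pair for which the interlocking you describe actually occurs despite submodularity.

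For comparison, the paper closes this gap with a four-good instance: weights $w(1)=w(2)=2$, $w(3)=w(4)=1$; both agents use rank-$2$ uniform-type matroids, with $\mathcal{I}_1$ excluding $\{1,3\}$ and $\{2,4\}$ and $\mathcal{I}_2$ excluding $\{1,4\}$ and $\{2,3\}$ from the size-at-most-two sets. Then $\mu_1=\mu_2=3$ while the total weight is $6$, so an $\MMS$ allocation would force $A_1\in\{\{1,4\},\{2,3\}\}$ and $A_2\in\{\{1,3\},\{2,4\}\}$, and every such pair intersects. Your template (four to six goods, a partition-style matroid, weights tuned so each agent's $\MMS$-achieving splits demand complementary placements of the heavy goods) is entirely consistent with this construction, so the gap is fillable along the lines you sketch---but as written the proof is incomplete.
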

\begin{proof}
Consider a fair division instance with two agents, $[n] = \{ 1,2 \}$, and four goods, $[m] = \{1,2,3,4\}$. Define $\mathcal{C}$ to be the collection of all subsets of $[m]$ of size at most two, $\mathcal{C} \coloneqq \{ X \subset [m] \ : \ |X| \leq 2 \}$ and weight function $w(\cdot)$ as $w(1) = w(2) = 2$ and $w(3) = w(4) = 1$. With $\mathcal{I}_1 \coloneqq \mathcal{C} \setminus \{ \{ 1, 3\}, \{ 2,4 \} \}$ and $\mathcal{I}_2 \coloneqq \mathcal{C} \setminus \{ \{1, 4\}, \{2, 3\} \}$, let $\mathcal{M}_1 = ([m], \mathcal{I}_1)$ and $\mathcal{M}_2 = ([m], \mathcal{I}_2)$ be two matroids of the two agents, respectively. One can verify that $\mathcal{M}_1$ and $\mathcal{M}_2$ are indeed matroids, i.e., they satisfy both the hereditary and the augmentation property. Here, the valuation $v_i$ of each agent $i \in \{1, 2\}$ is a weighted rank function: $v_i(S) \coloneqq \max_{T \subseteq S}\{ \sum_{t \in T} w(t) \ : \ T \in \mathcal{I}_i \}$, for all $S \subseteq [m]$. 

For agent $1$, $v_1(\{1,4\}) = v_1(\{2,3\}) = 3$, hence $\mu_1 = 3$. Similarly, for agent $2$, $v_2(\{1,3\}) = v_2(\{2,4\}) = 3$, which implies $\mu_2 = 3$. Since the sum of the weights of all the goods is equal to six, for an allocation $\alloc=(A_1, A_2)$ to satisfy the maximin share guarantee it has to be the case that $v_1(A_1) = v_2(A_2) = 3$. That is, $A_1 \in \mathcal{B}_1 \coloneqq \{\{1,4\}, \{2,3\} \}$ and $A_2 \in \mathcal{B}_2 \coloneqq \{\{1,3\}, \{2,4\}\}$. However, for every pair of subsets $X \in \mathcal{B}_1$ and $Y \in \mathcal{B}_2$, we have $X \cap Y \neq \emptyset$. Therefore, $A_1$ and $A_2$ could not be disjoint, i.e., a maximin share allocation $\alloc$ does not exist.
\end{proof}

\section{Pairwise Maximin Share Guarantee for Matroid-Rank Valuations}
This section provides a polynomial-time algorithm (Algorithm \ref{algo:pmms}) for finding a partial allocation that maximizes social welfare (across all allocations) and also satisfies the pairwise maximin share guarantee.

\floatname{algorithm}{Algorithm}
\begin{algorithm}[ht]
  \caption{\textsc{AlgPMMS}} \label{algo:pmms}

  \textbf{Input:} Fair division instance  $\langle [m],[n], \{v_i\}_i \rangle$ with value-oracle access to the matroid-rank valuations $v_i$'s. \\
  \textbf{Output:} Social welfare-maximizing and pairwise maximin share partial allocation $\mathcal{A} = (A_1, \ldots, A_n)$

  \begin{algorithmic}[1]
		\STATE Compute a social welfare maximizing (partial) allocation $\alloc = (A_1, \ldots, A_n) \in \mathcal{I}_1 \times \dots \times \mathcal{I}_n$   
		\WHILE{there exist $i,j \in [n]$ s.t.~that $v_i(A_i) < \mu_i(2, A_i \cup A_j)$} 
			\STATE Set $g$ to be a good in $A_j \cap F_i(A_i) = A_j \cap \{ g' \in [m] \setminus A_i : A_i + g' \in \mathcal{I}_i \}$ \label{line:swap-good}
			\STATE Update $A_i \gets A_i + g$ and $A_j \gets A_j - g$	\label{line:transfer}
			\ENDWHILE
		\STATE \textbf{return } $\alloc = (A_1, A_2, \dots , A_n)$
	\end{algorithmic}
\end{algorithm}

Algorithm \ref{algo:pmms} starts by computing a partial allocation $\alloc = (A_1, \allowbreak \ldots, A_n) \in \mathcal{I}_1 \times \dots \times \mathcal{I}_n$ that maximizes social welfare. As mentioned previously, we can find such an allocation (specifically, with independent bundles) via the matroid union algorithm~\cite[Chapter~42.3]{schrijver2003combinatorial}. 
Also, using the method detailed in Appendix \ref{appendix:compute-mms}, we can compute $\mu_i(2, A_i \cup A_j)$, for all agents $i, j \in [n]$, in polynomial time.  

Algorithm \ref{algo:pmms} iteratively updates $\alloc$ by selecting a pair of agents, $i$ and $j$, between whom the $\PMMS$ criterion does not hold and then it transfers a good from $A_j$ to $A_i$. Below, we will show that a good $g \in A_j \cap F_i(A_i)$ (as required in Line \ref{line:swap-good} of Algorithm \ref{algo:pmms}) necessarily exists and transferring it maintains social welfare. We will complete the proof by establishing that at most a polynomial number of such transfers are required to convert $\alloc$ into a $\PMMS$ allocation.

\begin{theorem}
\label{theorem:pmms}
Every fair division instance, with matroid-rank valuations, admits a partial allocation $\alloc$ that satisfies the pairwise maximin share guarantee and also maximizes social welfare (across all allocations). Furthermore, such a partial allocation $\alloc$ can be computed in polynomial time.
\end{theorem}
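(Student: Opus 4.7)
The plan has three verification goals for Algorithm~\ref{algo:pmms}: (a) the invariant $\alloc\in\mathcal{I}_1\times\dots\times\mathcal{I}_n$ together with optimality of $\SW(\alloc)$ is preserved by every transfer; (b) a good $g\in A_j\cap F_i(A_i)$ always exists when the loop condition fires; and (c) the while loop terminates after a polynomial number of iterations. I expect the main obstacle to lie in (c), where the strict inequality of the loop condition must be converted---via integrality of matroid rank and Proposition~\ref{prop:mms}---into an actual gap $|A_j|\ge |A_i|+2$ that drives a strictly decreasing potential.

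For (b), I would argue by contradiction. Suppose $A_j\cap F_i(A_i)=\emptyset$. Since $A_i\in\mathcal{I}_i$ and matroid rank has binary marginals, $v_i(A_i+g)=v_i(A_i)$ for every $g\in A_j$, and iterating submodularity then yields $v_i(A_i\cup A_j)=v_i(A_i)$. Combined with the trivial bound $\mu_i(2,S)\le v_i(S)$ from monotonicity of $v_i$, this gives $\mu_i(2,A_i\cup A_j)\le v_i(A_i)$, contradicting the loop condition $v_i(A_i) < \mu_i(2,A_i\cup A_j)$.

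For (a), the choice $g\in F_i(A_i)$ makes $A_i+g\in\mathcal{I}_i$, so $v_i(A_i+g)=v_i(A_i)+1$. Binary marginals force $v_j(A_j)-v_j(A_j-g)\in\{0,1\}$; were this marginal equal to $0$, the post-transfer social welfare would strictly exceed $\SW(\alloc)$, contradicting welfare-maximality of the initial (and hence current) $\alloc$. Hence $v_j(A_j-g)=|A_j|-1$, so $A_j-g\in\mathcal{I}_j$ and $\SW$ is preserved by the transfer; by induction the invariants in (a) hold throughout.

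For (c), Proposition~\ref{prop:mms} together with the trivial estimate $r_{i\times 2}(S)\le |S|$ gives $\mu_i(2,A_i\cup A_j)\le (|A_i|+|A_j|)/2$. Since $v_i(A_i)=|A_i|$ is an integer, the strict loop inequality then upgrades to $|A_i|+1\le (|A_i|+|A_j|)/2$, i.e., $|A_j|\ge |A_i|+2$. Taking $\Phi(\alloc):=\sum_k|A_k|^2$ as a potential, each transfer changes $\Phi$ by $(|A_i|+1)^2-|A_i|^2+(|A_j|-1)^2-|A_j|^2 = 2(|A_i|-|A_j|)+2 \le -2$, and since $0\le \Phi\le m^2$ the loop halts after $O(m^2)$ iterations. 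Each iteration is polynomial-time, since $\mu_i(2,\cdot)$ is computable via the method of Appendix~\ref{appendix:compute-mms} and the remaining checks use the value oracle for $v_i$. On termination, $v_i(A_i)\ge \mu_i(2,A_i\cup A_j)$ for every pair $i,j$, so $\alloc$ is $\PMMS$; by (a) it also maximizes $\SW$ across all allocations, completing the proof.
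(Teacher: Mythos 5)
Your proposal is correct and follows the paper's proof in all essentials: the same algorithm, the same invariants (independence of all bundles and welfare-optimality preserved by each transfer), the same cardinality gap $|A_j|\ge|A_i|+2$, and the same quadratic potential bounding the number of iterations by $O(m^2)$. The only substantive differences are local. For the existence of $g\in A_j\cap F_i(A_i)$, the paper takes a witness $2$-partition $(B_1,B_2)$ of $A_i\cup A_j$ achieving $\mu_i(2,A_i\cup A_j)$ and applies the matroid augmentation property between $B_1$ and $A_i$; you instead telescope submodularity to get $v_i(A_i\cup A_j)=v_i(A_i)$ when $A_j\cap F_i(A_i)=\emptyset$ and contradict $\mu_i(2,S)\le v_i(S)$ --- equally valid, and arguably more self-contained. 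For the gap $|A_j|\ge|A_i|+2$, the paper again reads it off the witness partition ($|B_k|\ge v_i(B_k)\ge|A_i|+1$ for each $k$), whereas you route it through Proposition~\ref{prop:mms}; both amount to $\mu_i(2,A_i\cup A_j)\le(|A_i|+|A_j|)/2$. One small imprecision: the upgrade from $v_i(A_i)<\mu_i(2,A_i\cup A_j)$ to $|A_i|+1\le(|A_i|+|A_j|)/2$ needs the integrality of $\mu_i(2,A_i\cup A_j)$ (a max-min of integer rank values), not just of $v_i(A_i)$ --- integrality of $|A_i|$ and $|A_j|$ alone would only yield $|A_j|\ge|A_i|+1$, which does not make the potential decrease. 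This is a one-line fix and matches what the paper implicitly uses. Your derivation of $A_j-g\in\mathcal{I}_j$ via welfare-maximality is also more roundabout than necessary (the hereditary property gives it directly), but it is sound.
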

\begin{proof}
Let $\alloc = (A_1, \ldots , A_n) \in \mathcal{I}_1 \times \ldots \times \mathcal{I}_n$ be a social welfare-maximizing (partial) allocation such that $v_i(A_i) < \mu_i(2, A_i \cup A_j)$, for some agents $i, j \in [n]$. For such an allocation and pair of agents, we will show that there necessarily exists a good $g \in A_j$ such that transferring $g$ to $A_i$ (i.e., executing Line \ref{line:transfer}) (i) maintains social welfare and (ii) strictly decreases $\sum_{k=1}^n v_k(A_k)^2$.

Since $0 \leq \sum_{k=1}^n v_k(A_k)^2 \leq m^2$, at most $m^2$ such transfers (equivalently, executions of Line \ref{line:transfer}) are possible. That is, after at most $m^2$ iterations, the allocation $\alloc$ in hand will not only continue to maximize social welfare, but will also be $\PMMS$.  

Therefore, to establish the theorem it suffices to show that the desired good transfer can be performed if (in a social welfare maximizing allocation) we have $v_i(A_i) < \mu_i(2, A_i \cup A_j)$, for any pair of agents $i, j \in [n]$. 

The definition of $\mu_i(2, A_i \cup A_j)$ implies that there exists a $2$-partition $(B_1, B_2)$ of the set $A_i \cup A_j$ such that $v_i(B_k) \geq \mu_i(2, A_i \cup A_j)$, for each $k \in \{1, 2\}$. Therefore, for each $k \in \{1, 2\}$, we have\footnote{Recall that, for matroid-rank function $v_i$, we have $v_i(S) \leq |S|$  for all $S \subseteq [m]$ and equality holds only if $S \in \mathcal{I}_i$.}  $|B_k| \geq v_i(B_k) \geq \mu_i(2, A_i \cup A_j) \geq v_i(A_i) + 1 = |A_i| + 1$ . Summing over $k \in \{1, 2\}$ gives us $2  |A_i| + 2 \leq |B_1| + |B_2| = |A_i| + |A_j|$; the last equality follows from the fact that $(B_1, B_2)$ is a partition of $A_i \cup A_j$. Simplifying we get $|A_i| + 2 \leq |A_j|$. {Note that the inequality $|A_i| + 2 \leq |A_j|$ implies that the $\PMMS$ criterion must hold for $j$--otherwise, we would obtain the following contradictory bound: $|A_j| +2 \leq |A_i|$.}
 
Next we will show that there exists a good $g \in A_j \cap F_i(A_i)$. Since $v_i(B_1) > v_i(A_i) = |A_i|$, the augmentation property of matroids ensures that there exists a good $g \in B_1 \setminus A_i$ such that $A_i + g \in \mathcal{I}_i$, i.e., $g \in F_i(A_i)$. Note that $B_1 \subseteq A_i \cup A_j$ and, hence, the good $g$ must be contained in $A_j$. Transferring this good $g \in A_j \cap F_i(A_i)$ from $j$ to $i$ maintains social welfare: $v_j(A_j - g) =  |A_j| - 1$ and $v_i(A_i + g) = |A_i| + 1$. Therefore, even after the transfer, $\alloc$ continues to be a social welfare maximizing allocation.  

Also, recall that $|A_i| + 2 \leq |A_j|$ and, hence, $v_i(A_i + g)^2 + v_j(A_j - g)^2 < v_i(A_i)^2 + v_j(A_j)^2$. This inequality ensures that the sum $\sum_{k=1}^n v_k(A_k)^2$ strictly decreases after every good transfer.

Since transferring $g$ satisfies the required properties (i) and (ii), Algorithm~\ref{algo:pmms} necessarily finds, in polynomial time, a partial allocation that maximizes social welfare and is $\PMMS$ as well. Note that the guaranteed success of the algorithm implies that such an allocation always exists. 
\end{proof}

\section{Conclusion and Future Work}
In this work we established the universal existence of maximin share allocations for matroid-rank valuations. Furthermore, we showed that, in this setting, fairness can be achieved in conjunction with economic efficiency and computational tractability. One can extend the list of desiderata by including truthfulness. As mentioned previously, for matroid-rank valuations, the work of Babaioff et al.~\cite{babaioff2020fair} provides a polynomial-time, truthful mechanism that computes $\EFone$ and Pareto efficient allocations. They additionally note that their mechanism does not necessarily find an $\MMS$ allocation~\cite[Proposition~6]{babaioff2020fair}. Hence, determining whether $\MMS$ admits a truthful mechanism---under rank valuations---is an interesting direction of future work. Note that in the special case of binary additive valuations, every $\EFone$ allocation is $\MMS$ as well and, hence, either one of the truthful mechanisms of Babaioff et al.~\cite{babaioff2020fair} or Halpern et al.~\cite{halpern2020fair} (designed for finding $\EFone$ and Pareto efficient allocations with binary additive valuations) suffices.

Additionally, we proved the existence and efficient computability of partial $\PMMS$ allocations which are simultaneously Pareto optimal. Establishing the existence of complete $\PMMS$ allocations under matroid-rank valuations would also be interesting.

Simple examples show that, in the rank context, $\MMS$ does not imply $\alpha$-approximate $\PMMS$, for any $\alpha>0$. In the reverse direction, we note that any social welfare-maximizing and $\PMMS$ allocation is $\frac{1}{2n}$-approximate $\MMS$ as well (Theorem \ref{theorem:pmms-mms} in Appendix \ref{appendix:pmms-mms}). Extending such results and developing a scale of fairness (as in~\cite{Amanatidis2018}) for rank functions is also interesting.

\section*{Acknowledgements}
Siddharth Barman gratefully acknowledges the support of a Ramanujan Fellowship (SERB - {SB/S2/RJN-128/2015}) and a Pratiksha Trust Young Investigator Award.

\bibliographystyle{alpha} 
\bibliography{fairness,ultimate}

\newcommand{\etalchar}[1]{$^{#1}$}
\begin{thebibliography}{CKM{\etalchar{+}}19}

\bibitem[ABM18]{Amanatidis2018}
Georgios Amanatidis, Georgios Birmpas, and Evangelos Markakis.
\newblock Comparing approximate relaxations of envy-freeness.
\newblock In {\em Proceedings of the 27th International Joint Conference on
  Artificial Intelligence}, IJCAI'18, 2018.

\bibitem[BB18]{BB18}
Arpita Biswas and Siddharth Barman.
\newblock Fair division under cardinality constraints.
\newblock In {\em Proceedings of the 27th International Joint Conference on
  Artificial Intelligence}, IJCAI'18, page 91–97. AAAI Press, 2018.

\bibitem[BCE{\etalchar{+}}16]{BCEL+16}
F.~Brandt, V.~Conitzer, U.~Endress, J.~Lang, and A.~D. Procaccia, editors.
\newblock {\em Handbook of Computational Social Choice}.
\newblock Cambridge University Press, 2016.

\bibitem[BCIZ20]{benabbou2020finding}
Nawal Benabbou, Mithun Chakraborty, Ayumi Igarashi, and Yair Zick.
\newblock Finding fair and efficient allocations when valuations don't add up.
\newblock In {\em Algorithmic Game Theory - 13th International Symposium,
  {SAGT} 2020}, Lecture Notes in Computer Science. Springer, 2020.

\bibitem[BCKO17]{budish2017course}
Eric Budish, G{\'e}rard~P Cachon, Judd~B Kessler, and Abraham Othman.
\newblock Course match: A large-scale implementation of approximate competitive
  equilibrium from equal incomes for combinatorial allocation.
\newblock {\em Operations Research}, 65(2), 2017.

\bibitem[BEF20]{babaioff2020fair}
Moshe Babaioff, Tomer Ezra, and Uriel Feige.
\newblock Fair and truthful mechanisms for dichotomous valuations.
\newblock {\em arXiv preprint arXiv:2002.10704}, 2020.

\bibitem[BL16]{BL16}
S.~Bouveret and M.~Lema{\^{\i}}tre.
\newblock Characterizing conflicts in fair division of indivisible goods using
  a scale of criteria.
\newblock {\em Autonomous Agents and Multi-Agent Systems}, 30(2):259--290,
  2016.

\bibitem[BM04]{BM04}
A.~Bogomolnaia and H.~Moulin.
\newblock Random matching under dichotomous preferences.
\newblock {\em Econometrica}, 72:257--279, 2004.

\bibitem[Bud11]{Bud11}
E.~Budish.
\newblock The combinatorial assignment problem: Approximate competitive
  equilibrium from equal incomes.
\newblock {\em Journal of Political Economy}, 2011.

\bibitem[CGH19]{caragiannis2019envy}
Ioannis Caragiannis, Nick Gravin, and Xin Huang.
\newblock Envy-freeness up to any item with high nash welfare: The virtue of
  donating items.
\newblock In {\em Proceedings of the 2019 ACM Conference on Economics and
  Computation}, pages 527--545, 2019.

\bibitem[CKM{\etalchar{+}}19]{CKMP+19}
I.~Caragiannis, D.~Kurokawa, H.~Moulin, A.~D. Procaccia, N.~Shah, and J.~Wang.
\newblock The unreasonable fairness of maximum {N}ash welfare.
\newblock {\em ACM Transactions on Economics and Computation (TEAC)},
  7(3):1--32, 2019.

\bibitem[CKMS20]{chaudhury2020little}
Bhaskar~Ray Chaudhury, Telikepalli Kavitha, Kurt Mehlhorn, and Alkmini
  Sgouritsa.
\newblock A little charity guarantees almost envy-freeness.
\newblock In {\em Proceedings of the Fourteenth Annual ACM-SIAM Symposium on
  Discrete Algorithms}. SIAM, 2020.

\bibitem[DFH20]{DFS20}
Amitay Dror, Michal Feldman, and Erel~Segal Halevi.
\newblock On fair division under heterogeneous matroid constraints.
\newblock {\em arXiv preprint arXiv:2010.07280}, 2020.

\bibitem[DSR13]{deng2013story}
Yongheng Deng, Tien~Foo Sing, and Chaoqun Ren.
\newblock The story of singapore's public housing: From a nation of
  home-seekers to a nation of homeowners.
\newblock In {\em The future of public housing}, pages 103--121. Springer,
  2013.

\bibitem[End17]{endriss2017trends}
Ulle Endriss.
\newblock {\em Trends in computational social choice}.
\newblock Lulu. com, 2017.

\bibitem[GHS{\etalchar{+}}18]{ghodsi2018fair}
Mohammad Ghodsi, MohammadTaghi HajiAghayi, Masoud Seddighin, Saeed Seddighin,
  and Hadi Yami.
\newblock Fair allocation of indivisible goods: Improvements and
  generalizations.
\newblock In {\em Proceedings of the 2018 ACM Conference on Economics and
  Computation}, pages 539--556, 2018.

\bibitem[GM17]{GM17}
L.~Gourv{\`e}s and J.~Monnot.
\newblock Approximate maximin share allocations in matroids.
\newblock In {\em CIAC}, 2017.

\bibitem[GMT14]{GMT14}
Laurent Gourv\`{e}s, J\'{e}r\^{o}me Monnot, and Lydia Tlilane.
\newblock Near fairness in matroids.
\newblock ECAI'14. IOS Press, 2014.

\bibitem[GT20]{garg2020improved}
Jugal Garg and Setareh Taki.
\newblock An improved approximation algorithm for maximin shares.
\newblock In {\em Proceedings of the 21st ACM Conference on Economics and
  Computation}, pages 379--380, 2020.

\bibitem[HPPS20]{halpern2020fair}
Daniel Halpern, Ariel~D Procaccia, Alexandros Psomas, and Nisarg Shah.
\newblock Fair division with binary valuations: One rule to rule them all.
\newblock In {\em International Conference on Web and Internet Economics}.
  Springer, 2020.

\bibitem[KPS15]{KPS15}
David Kurokawa, Ariel~D Procaccia, and Nisarg Shah.
\newblock Leximin allocations in the real world.
\newblock In {\em Proceedings of the Sixteenth ACM Conference on Economics and
  Computation}, pages 345--362, 2015.

\bibitem[KPW16]{KPW16}
David Kurokawa, Ariel~D Procaccia, and Junxing Wang.
\newblock When can the maximin share guarantee be guaranteed?
\newblock In {\em Proceedings of the Thirtieth AAAI Conference on Artificial
  Intelligence}, pages 523--529, 2016.

\bibitem[LMMS04]{LMMS042}
Richard~J Lipton, Evangelos Markakis, Elchanan Mossel, and Amin Saberi.
\newblock On approximately fair allocations of indivisible goods.
\newblock In {\em Proceedings of the 5th ACM conference on Electronic
  commerce}, pages 125--131, 2004.

\bibitem[PW14]{PW14}
Ariel~D Procaccia and Junxing Wang.
\newblock Fair enough: Guaranteeing approximate maximin shares.
\newblock In {\em Proceedings of the fifteenth ACM conference on Economics and
  computation}, pages 675--692, 2014.

\bibitem[Sch03]{schrijver2003combinatorial}
Alexander Schrijver.
\newblock {\em Combinatorial optimization: polyhedra and efficiency},
  volume~24.
\newblock Springer Science \& Business Media, 2003.

\end{thebibliography}

\section*{Appendix}
\appendix
\section{Computing Maximin Shares for Matroid-Rank Valuations}
\label{appendix:compute-mms}
This section shows that if the valuation of an agent is a matroid-rank function, then her maximin share can be computed in polynomial time. 

\begin{theorem}
\label{theorem:compute-maximin}
Given any fair division instance $\langle [m], [n], \{v_i\}_{i=1}^n \rangle$ with matroid-rank valuations, the maximin share $\mu_i(n, [m])$ of each agent $i \in [n]$ can be computed in polynomial time. 
\end{theorem}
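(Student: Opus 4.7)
The plan is to reduce the computation of $\mu_i(n,[m])$ to a sequence of matroid-union rank queries. Since $v_i$ is the rank function of the matroid $\mathcal{M}_i = ([m], \mathcal{I}_i)$, it is integer-valued, so $\mu_i(n,[m])$ is a nonnegative integer bounded above by $\lfloor m/n \rfloor$ (the smallest bundle in any $n$-partition of $[m]$ has size at most $\lfloor m/n \rfloor$, and $v_i$ is dominated by cardinality). I would binary-search over this range; the bottleneck is then a decision procedure that, for a given integer $k$, tests whether $\mu_i(n,[m]) \geq k$.

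The key observation driving the reduction is the following equivalence: $\mu_i(n,[m]) \geq k$ if and only if there exist pairwise disjoint independent sets $I_1, \ldots, I_n \in \mathcal{I}_i$, each of size exactly $k$. For the forward direction, given an $n$-partition $(X_1, \ldots, X_n)$ of $[m]$ achieving the maximin value, each $X_j$ contains an independent subset of size $v_i(X_j) \geq k$, from which one selects $I_j \subseteq X_j$ of size $k$ via the hereditary property. The converse holds by setting $X_j = I_j$ for $j > 1$ and dumping all leftover goods into $X_1$; monotonicity of $v_i$ then gives $v_i(X_j) \geq k$ for every $j$.

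To certify this existence condition via matroid union, I would pass to the \emph{$k$-truncation} of $\mathcal{M}_i$, the matroid whose independent sets are $\{ I \in \mathcal{I}_i : |I| \leq k \}$. Its rank function is $S \mapsto \min\{v_i(S), k\}$, so a value oracle for the truncation is implementable with a single query to $v_i$. By the definition of matroid union, $n$ pairwise disjoint size-$k$ independent sets of $\mathcal{M}_i$ exist if and only if the rank of the $n$-fold union of this truncation equals $nk$. Edmonds' matroid union algorithm, already invoked in the paper, computes this rank in polynomial time from value-oracle access alone.

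Assembling the pieces, the decision procedure for a fixed $k$ is a single matroid-union rank computation on $n$ copies of the $k$-truncation of $\mathcal{M}_i$, and the predicate ``$\mu_i(n,[m]) \geq k$'' is monotone decreasing in $k$ (by the hereditary property, size-$k$ disjoint independent sets can always be shrunk to size $k-1$), so binary search finds the largest valid $k$ in $O(\log m)$ rounds. I do not anticipate any genuine obstacle here; the only care required is to verify that the truncation's rank oracle is cheap and to recognize that the maximin question, which superficially looks like a partitioning problem, is equivalent to a matroid-union feasibility problem on the truncated matroid.
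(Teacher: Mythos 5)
Your proposal is correct, but it takes a genuinely different route from the paper. The paper runs the matroid union algorithm \emph{once} on the untruncated $n$-fold union $\mathcal{M}_{i \times n}$ to get a maximum-size independent set split into parts $A_1, \ldots, A_n \in \mathcal{I}_i$, then rebalances these parts by repeated single-element transfers (justified by the augmentation property, with $\sum_t v_i(A_t)^2$ as a potential function) until all sizes differ by at most one, and finally argues by an extremal counting argument that the minimum part value equals $\mu_i(n,[m])$: if it were smaller, a partition witnessing $\mu_i$ would yield $n$ disjoint independent sets whose union exceeds the rank of $\mathcal{M}_{i \times n}$, a contradiction. You instead reduce to a decision problem --- ``is $\mu_i(n,[m]) \geq k$?'' --- characterize it exactly as the $n$-fold union of the $k$-truncation of $\mathcal{M}_i$ having rank $nk$, and binary-search over $k \leq \lfloor m/n \rfloor$. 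Your equivalence (disjoint size-$k$ independent sets $\Leftrightarrow$ a partition with all values $\geq k$, via hereditariness in one direction and monotonicity plus dumping leftovers in the other) is sound, the truncation's rank oracle $S \mapsto \min\{v_i(S), k\}$ is indeed a single query, and the rank of the union being at most $nk$ with equality forcing disjoint size-$k$ parts is correct. What each approach buys: the paper's version uses a single union computation plus $O(m^2)$ cheap exchanges and hands you the near-balanced witnessing family explicitly; yours cleanly decouples decision from search, dispenses with the rebalancing argument entirely, and makes the connection to matroid union feasibility more transparent, at the cost of introducing the truncation construct and $O(\log m)$ separate union computations. Both ultimately rest on the same engine, Edmonds' matroid union algorithm with value-oracle access.
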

\begin{proof}
Fix an agent $i \in [n]$ and let $A = \cup_{t \in [n]} A_t$ be a maximum-size independent set in the $n$-fold union of $\mathcal{M}_i$, i.e., in 
\begin{align*}
\mathcal{M}_{i \times n} \coloneqq \left( [m], \{J_1 \cup \ldots \cup J_n : J_t \in \mathcal{I}_i \text{ for all } t \in [n]  \} \right).
\end{align*}
Note that such a set $A$---along with the independent subsets $A_t \in \mathcal{I}_i$, for all $t \in [n]$---can be computed in polynomial time using the matroid union algorithm~\cite[Chapter~42.3]{schrijver2003combinatorial}.

We will keep updating the collection of subsets $(A_1, \ldots, A_n) \in \mathcal{I}_i \times \ldots \times \mathcal{I}_i$ as long as there exists a pair of indices $j, k \in [n]$ such that $v_i(A_j) \leq v_i(A_k) -2$ (equivalently, $|A_j| \leq |A_k| -2$). Note that in such a case (via the augmentation property) there exists a good $g' \in A_k$ such that $A_j + g' \in \mathcal{I}_i$. We update $A_j \leftarrow A_j + g'$ along with $A_k \leftarrow A_k -g'$ and keep all the other subsets unchanged. The update maintains the independence of the constituent subsets, i.e., the following containment continues to hold $(A_1, \ldots, A_n) \in \mathcal{I}_i \times \ldots \times \mathcal{I}_i$. 
In addition, such an update strictly decreases $\sum_{t \in [n]} v_i(A_t)^2$ (since, $v_i(A_j) \leq v_i(A_k) -2$). Given that $0 \leq \sum_{j \in [n]} v_i(A_j)^2 \leq m^2$, the total number of updates is upper bounded by $m^2$. That is, after at most $m^2$ transfers we will necessarily have a collection $(A_1, \ldots, A_n) \in \mathcal{I}_i \times \ldots \times \mathcal{I}_i$ with the property that 
\begin{align}
v_i(A_j) \geq v_i(A_k) -1 \qquad \text{ for all $j, k \in [n]$} \tag{P}
\end{align}
Write $\left(\widehat{A}_1, \ldots, \widehat{A}_n \right) \in \mathcal{I}_i \times \ldots \times \mathcal{I}_i$ to denote such a collection with property (P). Also, note that $\cup_{i=1}^n \widehat{A}_i$ is a maximum-size independent set in $\mathcal{M}_{i \times n}$. 

We will complete the proof by showing that $\mu_i(n, [m]) =  \min_{t \in [n]}  v_i(\widehat{A}_t)$. The definition of maximin shares ensures that $\mu_i(n, [m]) \geq \min_{t \in [n]}  v_i(\widehat{A}_t)$. 
Now, assume, towards a contradiction, that $\mu_i(n, [m]) > \min_{t \in [n]} \ v_i(\widehat{A}_t)$. This strict inequality and property (P) gives us 
\begin{align}
n \ \mu_i(n, [m]) > \sum_{t=1}^n v_i(\widehat{A}_t) = \sum_{t=1}^n |\widehat{A}_t| = |\cup_{t=1}^n \widehat{A}_t| \label{ineq:comp-mms}
\end{align}
The penultimate equality follows from the fact that $\widehat{A}_t \in \mathcal{I}_i$. 

Next, let $(B_1, \ldots, B_n) \in \mathcal{I}_i \times \ldots \times \mathcal{I}_i$ be a collection of independent subsets that induce $\mu_i(n, [m])$, i.e., $|B_t| = v_i(B_t) \geq \mu_i(n, [m])$, for all $t \in [n]$. Summing over $t$ and using inequality (\ref{ineq:comp-mms}) we get $\sum_{t=1}^n |B_t| \geq n \mu_i(n, [m]) > |\cup_{t=1}^n \widehat{A}_t|$. That is, $| \cup_{t=1}^n B_t | > |\cup_{t=1}^n \widehat{A}_t|$. However, this inequality and the independence of $\cup_{t=1}^n B_t$ (in $\mathcal{M}_{i \times n}$) contradict the fact that $\cup_{i=1}^n \widehat{A}_i$ is a maximum-size independent set in $\mathcal{M}_{i \times n}$. This shows that $\mu_i(n, [m]) =  \min_{t \in [n]}  v_i(\widehat{A}_t)$ and completes the proof. 
\end{proof}

\section{Missing Proofs from Sections~\ref{section:notation} and~\ref{section:mms}}
\label{appendix:mms-proofs}

\subsection{Proof of Lemma~\ref{lemma:path-augmentation}}
\label{appendix:proof-of-path-augmentation}

This section provides a proof of Lemma \ref{lemma:path-augmentation} using the notation of the present paper. We begin by defining a relevant construct and stating two known results: Lemma \ref{lemma:perfect-matching1}~\cite[Theorem~39.13]{schrijver2003combinatorial} and Lemma \ref{lemma:perfect-matching2}~\cite[Corollary~39.13a]{schrijver2003combinatorial}. 

For a matroid $\mathcal{M} = ([m], \mathcal{I})$ and an independent set $A \in \mathcal{I}$, denote by $\mathcal{D}_{\mathcal{M}}(A) = ([m], E_A)$ the directed graph whose  vertex set is $[m]$ and edge set is $E_A \coloneqq \left\{ (g,g') \in   A \times ([m] \setminus A)  \ : \  A-g+g' \in \mathcal{I} \right\}$. Note that for a (partial) allocation $\alloc = (A_1, A_2, \ldots ,A_n) \in \mathcal{I}_1 \times \mathcal{I}_2 \times \ldots \times \mathcal{I}_n$, the set of edges in the exchange graph $\mathcal{G}(\alloc)$ is equal to the union of the edge sets of $\mathcal{D}_{\mathcal{M}_1}(A_1), \ldots, \mathcal{D}_{\mathcal{M}_n}(A_n)$.

\begin{lemma}[\cite{schrijver2003combinatorial}]
\label{lemma:perfect-matching1}
Let $\mathcal{M} = ([m], \mathcal{I})$ be a matroid and let $A \in \mathcal{I}$ be an independent set. Let $B \subseteq [m]$ be such that $|B| = |A|$ and the graph $\mathcal{D}_{\mathcal{M}}(A)$ contains a \emph{unique} perfect matching on the set of vertices $A \Delta B$. Then, the set $B$ is independent as well, $B \in \mathcal{I}$.
\end{lemma}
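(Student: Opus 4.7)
The plan is to exploit the rigidity that uniqueness of the perfect matching forces on the exchange graph, and then rebuild $B$ from $A$ one element at a time via the augmentation axiom. Set $k = |A \setminus B| = |B \setminus A|$ and label the unique matching as $\{(a_i, b_i) : i \in [k]\}$ with $a_i \in A \setminus B$ and $b_i \in B \setminus A$. The first step is to reorder the indices so that every edge of $\mathcal{D}_{\mathcal{M}}(A)$ lying inside $A \Delta B$ satisfies the triangular condition $(a_i, b_j) \in E \Rightarrow i \leq j$. To see such an ordering exists, form an auxiliary directed graph on $[k]$ with an arc $i \to j$ whenever $(a_i, b_j)$ is a non-matching edge of the exchange graph. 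A directed cycle in this auxiliary graph corresponds to an alternating cycle relative to the matching $\{(a_i, b_i)\}$, and toggling it would produce a second perfect matching, contradicting uniqueness. Hence the auxiliary graph is a DAG, and a topological sort yields the desired reordering.

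With the triangular order in hand, define $A_j := (A \setminus \{a_1, \dots, a_j\}) \cup \{b_1, \dots, b_j\}$ for $j = 0, 1, \dots, k$, so that $A_0 = A$ and $A_k = B$. I prove $A_j \in \mathcal{I}$ by induction on $j$; the base case is immediate. For the inductive step, given $A_{j-1} \in \mathcal{I}$, compare the two independent sets $A - a_j + b_j$ (independent because $(a_j, b_j)$ is a matching edge) of size $|A|$ and $A_{j-1} - a_j$ of size $|A| - 1$. The augmentation property furnishes some $x \in (A - a_j + b_j) \setminus (A_{j-1} - a_j) = \{a_1, \dots, a_{j-1}, b_j\}$ with $(A_{j-1} - a_j) + x \in \mathcal{I}$. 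If $x = b_j$, we immediately obtain $A_j = (A_{j-1} - a_j) + b_j \in \mathcal{I}$.

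The delicate case is $x = a_l$ for some $l < j$. Then $D := (A_{j-1} - a_j) + a_l \in \mathcal{I}$ has size $|A|$, with $A \setminus D = \{a_1, \dots, a_j\} \setminus \{a_l\}$ and $D \setminus A = \{b_1, \dots, b_{j-1}\}$. A standard matching-existence theorem for equal-sized independent sets (a corollary of strong basis exchange) guarantees a perfect matching in $\mathcal{D}_{\mathcal{M}}(A)$ between $A \setminus D$ and $D \setminus A$. But $a_j$ lies in $A \setminus D$ (since $l < j$), whereas every element of $D \setminus A$ has index at most $j - 1$; by the triangular condition, no edge $(a_j, b_q)$ with $q \leq j - 1$ exists, a contradiction. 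This rules out $x = a_l$ and completes the induction, giving $B = A_k \in \mathcal{I}$.

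The main obstacle is the triangularization step: extracting a usable ordering from the bare uniqueness hypothesis. Once the ordering is in place, the remainder is routine matroid bookkeeping via augmentation and basis exchange, with uniqueness re-entering only indirectly through the triangular structure it induced. The alternating-cycle argument is what converts the combinatorial hypothesis into something structural enough to drive the induction.
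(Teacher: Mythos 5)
The paper does not actually prove this lemma: it is imported verbatim from Schrijver's book (Theorem~39.13 there) and used as a black box in the proof of Lemma~\ref{lemma:path-augmentation}. So the comparison here is against the textbook proof rather than anything in the paper. Your argument is correct. The triangularization step --- topologically sorting the auxiliary digraph after observing that a directed cycle would yield an alternating cycle and hence a second perfect matching --- is exactly the opening move of the standard proof, and you execute it cleanly (no self-loops can occur since $(a_i,b_i)$ is the matching edge, and any cycle of length $\geq 2$ toggles to a genuinely different matching). Where you diverge is in how the triangular order is exploited. The textbook proof is by contradiction on $B$ as a whole: if $B$ contained a circuit $C$, one picks the extremal index $i$ with $b_i \in C$, shows $C - b_i$ lies in the span of $A - a_i$ using the absence of forward arcs, and contradicts $A - a_i + b_i \in \mathcal{I}$. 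You instead build $B$ from $A$ one swap at a time, using the augmentation axiom to force the $j$-th swap and ruling out the bad augmentation element $a_l$ by invoking the \emph{forward} matching theorem (independent sets of equal size admit a perfect matching in the exchange graph) applied to the pair $(A, D)$ --- which is not circular, since that forward direction has its own independent proof. Your route buys an argument that never mentions circuits or closure, at the cost of importing the forward matching theorem as an additional black box; the textbook route is more self-contained but requires comfort with spans. Both are valid; your case analysis ($x = b_j$ versus $x = a_l$, with $A \Delta D \subseteq A \Delta B$ so the triangular condition still applies) is airtight.
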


\begin{lemma}[\cite{schrijver2003combinatorial}]
\label{lemma:perfect-matching2}
Let $\mathcal{M} = ([m], \mathcal{I})$ be a matroid with rank function $r$ and let $A \in \mathcal{I}$ be an independent set. Let $B \subseteq [m]$ be such that $|B| = |A|$, $r(A \cup B) = |A|$, and the graph $\mathcal{D}_{\mathcal{M}}(A)$ contains a \emph{unique} perfect matching on the set of vertices $A \Delta B$. Under these conditions, if for an element $s \notin A \cup B$ we have $A+s \in \mathcal{I}$, then $B+s \in \mathcal{I}$. 
\end{lemma}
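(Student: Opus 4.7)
The plan is to reduce directly to Lemma~\ref{lemma:perfect-matching1} applied to the pair $A' \coloneqq A+s$ and $B' \coloneqq B+s$. By hypothesis $A' = A+s \in \mathcal{I}$ and $|B'| = |B|+1 = |A|+1 = |A'|$. Crucially, since $s \notin A \cup B$, we have $s \in A' \cap B'$, so the symmetric difference is preserved: $A' \Delta B' = A \Delta B$. Hence, once I exhibit a unique perfect matching on $A \Delta B$ inside the exchange graph $\mathcal{D}_\mathcal{M}(A+s)$, Lemma~\ref{lemma:perfect-matching1} immediately concludes $B' = B+s \in \mathcal{I}$.

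The key intermediate claim is that the edge set of $\mathcal{D}_\mathcal{M}(A+s)$ restricted to vertex set $A \Delta B$ coincides with that of $\mathcal{D}_\mathcal{M}(A)$ restricted to $A \Delta B$. Since $s \notin A \Delta B$, every edge of either graph within $A \Delta B$ runs from some $g \in A \setminus B$ to some $g' \in B \setminus A$; by definition such an edge appears in $\mathcal{D}_\mathcal{M}(A)$ iff $A - g + g' \in \mathcal{I}$, and appears in $\mathcal{D}_\mathcal{M}(A+s)$ iff $(A+s) - g + g' \in \mathcal{I}$. I will therefore prove the equivalence $A - g + g' \in \mathcal{I} \iff A + s - g + g' \in \mathcal{I}$ for all such $g, g'$.

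The $(\Leftarrow)$ direction is immediate from the hereditary property. For $(\Rightarrow)$, suppose $A - g + g' \in \mathcal{I}$. This is an independent set of size $|A|$, while $A+s$ is independent of size $|A|+1$. Applying the augmentation property produces an element of $(A+s) \setminus (A - g + g') = \{g, s\}$ that extends $A - g + g'$ while preserving independence. If the extending element were $g$, the resulting set would be $A + g'$, and its independence would force $r(A \cup B) \geq |A|+1$, contradicting the hypothesis $r(A \cup B) = |A|$ (here I use $g' \in B \setminus A \subseteq A \cup B$). So the extending element must be $s$, yielding $A + s - g + g' \in \mathcal{I}$ as desired. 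This step—invoking augmentation and using the rank hypothesis to block the "wrong" choice—is the only nontrivial obstacle in the argument.

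With the edge equivalence established, the unique perfect matching on $A \Delta B$ in $\mathcal{D}_\mathcal{M}(A)$ is, verbatim, the unique perfect matching on $(A+s) \Delta (B+s)$ in $\mathcal{D}_\mathcal{M}(A+s)$ (both existence and uniqueness transfer, since the underlying edge sets on this vertex set agree). Lemma~\ref{lemma:perfect-matching1}, applied to the matroid $\mathcal{M}$ with independent set $A+s$ and candidate set $B+s$, then gives $B+s \in \mathcal{I}$, completing the proof.
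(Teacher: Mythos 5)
Your proof is correct. Note that the paper itself offers no proof of this lemma: it is imported verbatim from Schrijver (Corollary~39.13a) and used as a black box in the derivation of Lemma~\ref{lemma:path-augmentation}, so there is no internal argument to compare against. What you have written is essentially the textbook derivation: reduce to Lemma~\ref{lemma:perfect-matching1} via the pair $A+s$, $B+s$, whose symmetric difference equals $A \Delta B$ because $s \notin A \cup B$, and transfer the unique perfect matching by showing the two exchange graphs agree on this vertex set. Your handling of the one nontrivial step is sound: within $A \Delta B$ both $\mathcal{D}_{\mathcal{M}}(A)$ and $\mathcal{D}_{\mathcal{M}}(A+s)$ are bipartite with all edges directed from $A \setminus B$ to $B \setminus A$ (since $s \notin A \Delta B$), the $(\Leftarrow)$ direction of the edge equivalence is the hereditary property, and in the $(\Rightarrow)$ direction the augmentation of $A - g + g'$ from $A+s$ must use $s$ rather than $g$, because $A + g' \subseteq A \cup B$ being independent would force $r(A \cup B) \geq |A| + 1$. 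Two micro-details you leave implicit but which follow instantly from $s \notin A \cup B$: $s \neq g'$, which is needed to conclude $(A+s) \setminus (A - g + g') = \{g, s\}$, and $g' \notin A + s$, which is needed for $(g, g')$ to qualify as an edge of $\mathcal{D}_{\mathcal{M}}(A+s)$ at all. Neither is a gap; the argument is complete and matches the proof the paper's citation points to.
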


We now restate and prove Lemma \ref{lemma:path-augmentation}. 

\LemmaPathAugmentation*

\begin{proof}
Fix any agent $k \in [n] \setminus \{i, j \}$ and write $B_k \coloneqq A_k \Delta P$. The definition of path augmentation gives us $|A_k| = |B_k|$. Furthermore, since $P=(g_1, g_2, \ldots, g_t)$ is a shortest path, the set of edges $N \coloneqq \{ (g_\ell, g_{\ell+1}) \ : \ g_\ell \in A_k \}$ form a unique perfect matching on the set of vertices $A_k \Delta B_k = \{ g_\ell, g_{\ell+1} \ : \ g_\ell \in A_k \}$. Note that all the edges in $N$ are present in the graph $\mathcal{D}_{\mathcal{M}_k}(A_k)$ and, hence, applying Lemma \ref{lemma:perfect-matching1} we get $B_k \in \mathcal{I}_k$, for $k \in [n] \setminus \{i, j \}$.

Since $A_j \in \mathcal{I}_j$, the containment $A_j - g_t \in \mathcal{I}_j$ directly follows from the hereditary property of matroids. For the remaining agent $i$, write $B_i \coloneqq A_i \Delta P$. As in the case of other agents $k$, there exists a unique perfect matching on the set of vertices $A_i \Delta B_i$ in $\mathcal{D}_{\mathcal{M}_i}(A_i)$. In addition, the fact that $P$ is a shortest path from $F_i(A_i)$ to $A_j$ implies $F_i(A_i) \cap (B_i \setminus A_i) = \emptyset$; otherwise there would exist a shorter path starting from (a different vertex of) $F_i(A_i)$ to $A_j$. This non-intersection and the definition of $F_i(A_i)$ ensure that, for each $b \in B_i \setminus A_i$, we have $A_i + b \notin \mathcal{I}_i$. That is, $v_i(A_i \cup B_i) = |A_i|$. Finally, note that $A_i + g_1 \in \mathcal{I}_i$ and, hence, via Lemma \ref{lemma:perfect-matching2}, we get that $B_i + g_1 \in \mathcal{I}_i$. 
\end{proof}

\subsection{Proof of Proposition \ref{prop:mms}}
\label{appendix:proof-of-mms-prop}

This section restates and proves Proposition \ref{prop:mms}.

\PropositionMMS*
\begin{proof}
Recall that $r_{i \times k}$ is the rank function of matroid $\mathcal{M}_{i \times k}$ and $r_{i \times k}(S)$ is equal to the maximum possible social welfare that can be obtained by partitioning $S$ among $k$ copies of agent $i$. That is, for any $k$-partition $(K_1, K_2, \ldots, K_k)$ of $S$ we have 
\begin{align}
\sum_{j=1}^k v_i(K_j) & \leq r_{i \times k}(S) \label{ineq:rik}
\end{align}

Now, towards a contradiction, assume that $k \cdot \mu_i(k,S) > r_{i \times k}(S)$. By definition of $\mu_i(k,S)$, there exists a $k$-partition $\mathcal{B} = (B_1, B_2, \ldots, B_k)$ of  $S$ such that $v_i(B_j) \geq \mu_i(k,S)$ for all $j \in [k]$. This implies that  $\sum_{j=1}^k v_i(B_j) \geq k \cdot \mu_i(k,S) > r_{i \times k}(S)$. This, however, contradicts inequality (\ref{ineq:rik}) and completes the proof. 
\end{proof}

\section{$\PMMS$ implies $\EFone$}
\label{appendix:pmms-efone}
In this section we prove that under monotonic submodular valuations---and, hence, specifically under matroid-rank valuations---every $\PMMS$ allocation is in fact $\EFone$. Recall that a set function $v: 2^{[m]} \mapsto \mathbb{R}_+$ is said to be submodular iff it satisfies $v(A + g) - v(A) \geq v(B + g) - v(B)$ for all subsets $A \subseteq B$ and $g \notin B$.

\begin{theorem}
\label{theorem:pmms-efone}
In any fair division instance with monotonic submodular valuations, if $\alloc$ is a $\PMMS$ (partial) allocation, then $\alloc$ is $\EFone$ as well.  
\end{theorem}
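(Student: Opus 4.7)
The plan is to fix arbitrary agents $i, j \in [n]$ with $A_j \neq \emptyset$ and exhibit a good $g \in A_j$ witnessing the $\EFone$ condition, namely $v_i(A_i) \geq v_i(A_j - g)$. The key lever is the $\PMMS$ inequality $v_i(A_i) \geq \mu_i(2, A_i \cup A_j)$ applied to the natural family of $2$-partitions $(A_i + g,\, A_j - g)$ of $A_i \cup A_j$ indexed by $g \in A_j$; each such partition yields a concrete lower bound on $\mu_i(2, A_i \cup A_j)$.

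Concretely, for every $g \in A_j$ the definition of $\mu_i(2, \cdot)$ together with $\PMMS$ gives
\[
v_i(A_i) \;\geq\; \mu_i(2, A_i \cup A_j) \;\geq\; \min\bigl(v_i(A_i + g),\; v_i(A_j - g)\bigr).
\]
Suppose, toward a contradiction, that no $g \in A_j$ witnesses $\EFone$, i.e., $v_i(A_j - g) > v_i(A_i)$ for every $g \in A_j$. Then the minimum on the right-hand side must be attained by $v_i(A_i + g)$, giving $v_i(A_i) \geq v_i(A_i + g)$ for each such $g$. Monotonicity supplies the reverse inequality, so $v_i(A_i + g) = v_i(A_i)$ for all $g \in A_j$; that is, every element of $A_j$ has zero marginal with respect to $A_i$.

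I would then promote this single-element statement to a whole-set statement via submodularity. Enumerating $A_j = \{g_1, \ldots, g_k\}$ and setting $S_t \coloneqq A_i \cup \{g_1, \ldots, g_t\}$, the diminishing-marginals property yields $v_i(S_t) - v_i(S_{t-1}) \leq v_i(A_i + g_t) - v_i(A_i) = 0$, while monotonicity gives the reverse inequality, so each marginal is exactly zero. Telescoping produces $v_i(A_i \cup A_j) = v_i(A_i)$, and one more use of monotonicity yields $v_i(A_j - g) \leq v_i(A_j) \leq v_i(A_i \cup A_j) = v_i(A_i)$ for any $g \in A_j$, contradicting the standing assumption. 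Since $i, j$ were arbitrary (and $\EFone$ is vacuous when $A_j = \emptyset$), $\alloc$ is $\EFone$.

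The only nontrivial step is the telescoping argument in the third paragraph; I expect this to be the main technical hurdle, but it is a standard consequence of submodular diminishing marginals paired with monotonicity. Everything else is a clean case analysis enabled by the fact that every $g \in A_j$ induces a valid $2$-partition of $A_i \cup A_j$ against which $\PMMS$ can be invoked.
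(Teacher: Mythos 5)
Your proof is correct and is essentially the paper's argument in contrapositive-by-contradiction form: both use the $2$-partitions $(A_i+g,\,A_j-g)$ to relate $\PMMS$ to single-good marginals, then use submodularity (your telescoping is just the expanded form of the paper's inequality $v_i(A_i \cup A_j) \leq v_i(A_i) + \sum_{g \in A_j}\left(v_i(A_i+g)-v_i(A_i)\right)$) together with monotonicity to reach the same contradiction. No gaps; the step you flag as the main hurdle is exactly the standard submodularity bound the paper invokes in one line.
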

\begin{proof}
We establish the contrapositive form of the claim. Assume that (partial) allocation $\alloc$ is not $\EFone$: there exists a pair of agents $i, j \in [n]$ such that, for every good $g \in A_j$, we have $v_i(A_i) < v_i(A_j - g)$. Since agent $i$ envies agent $j$ and the function $v_i$ is monotonic, we know that $v_i(A_i) < v_i(A_j) \leq v_i(A_i \cup A_j)$. The submodularity of $v_i$ implies that $v_i(A_i) < v_i(A_i \cup A_j) \leq v_i(A_i) + \sum_{g \in A_j} \left( v_i(A_i + g) - v_i(A_i) \right)$. Hence, there exists a good $g' \in A_j$ with the property that $v_i(A_i) < v_i(A_i + g')$. Therefore, we have $v_i(A_i) < \min\{v_i(A_i + g'), v_i(A_j - g')\} \leq \mu_i(2, A_i \cup A_j)$; the last inequality follows from the definition of $\mu_i(2, A_i \cup A_j)$. That is, $\alloc$ does not satisfy the pairwise maximin share guarantee and the theorem follows.
\end{proof}


\section{$\EFone$ does not imply $\MMS$ or $\PMMS$}
\label{example:EFone-not-MMS}
This section provides a two-agent instance (with matroid-rank valuations) wherein an $\EFone$ allocation is neither $\MMS$ or $\PMMS$. In particular, the example here  shows that a converse of Theorem \ref{theorem:pmms-efone} does not hold. 

Consider an instance with $[n] = \{1,2\}$ and $[m] = \{1,2,3,4,5,6\}$. Define set family $\mathcal{I}_1 \coloneqq \{X \subseteq [m] \ : \ |X \cap \{1,2\}|  \leq 1 \text{ and } |X \cap \{3,4\}| \leq 1 \}$ along with matroids $\mathcal{M}_1 = \left([m], \mathcal{I}_1\right)$ and $\mathcal{M}_2 = \left( [m], 2^{[m]} \right)$. The valuation function $v_i$ of each agent $i$ is the rank function of $\mathcal{M}_i$. In this instance, the allocation $\alloc = (A_1, A_2)$, with $A_1 = \{5,6\}$ and $A_2 = \{1,2,3,4\}$, is envy-free (and, hence, $\EFone$) but it is not $\PMMS$ (or $\MMS$), since $\mu_1(2, [m]) = 3$.

\section{$\PMMS$ implies $\frac{1}{2n-1}$-$\MMS$}
\label{appendix:pmms-mms}

Here we provide a scale-of-fairness result between $\PMMS$ and $\MMS$. 
\begin{theorem}
\label{theorem:pmms-mms}
In any fair division instance with matroid-rank valuations, if $\alloc$ is a $\PMMS$ and social welfare-maximizing (partial) allocation, then $\alloc$ is $\frac{1}{2n-1}$-approximate $\MMS$ as well.  
\end{theorem}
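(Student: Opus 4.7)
The plan is to bound $\mu_i$ in terms of $v_i(A_i)$ for each agent $i$ separately. First, I observe that without loss of generality $A_k \in \mathcal{I}_k$ for every $k$: otherwise replace each $A_k$ with a maximum-size $\mathcal{M}_k$-independent subset of itself and move the discarded goods into the unallocated pool $U$. This operation preserves social welfare and each $v_k(A_k)$, and it preserves the $\PMMS$ property since each set $A_i \cup A_j$ can only shrink (so $\mu_i(2, A_i \cup A_j)$ can only decrease). I then split into two regimes depending on whether $v_i(A_i) \geq 1$ or $v_i(A_i) = 0$.

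For the main regime $v_i(A_i) \geq 1$, my plan is to prove $\mu_i \leq n\, v_i(A_i) + (n-1)$, from which the conclusion follows because $(2n-1)\, v_i(A_i) = n\, v_i(A_i) + (n-1)\, v_i(A_i) \geq n\, v_i(A_i) + (n-1)$. The first step is to upgrade $\PMMS$ into the pairwise estimate $v_i(A_i \cup A_j) \leq 2\, v_i(A_i) + 1$ for every $j \neq i$. The key observation is that for a matroid-rank function $v_i$ and any $S \subseteq [m]$ one has $\mu_i(2, S) \geq \lfloor v_i(S)/2 \rfloor$: take a maximum $\mathcal{M}_i$-independent subset $I \subseteq S$, split it into halves of sizes $\lceil |I|/2 \rceil$ and $\lfloor |I|/2 \rfloor$, and extend each half into one part of a $2$-partition of $S$. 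Combined with $v_i(A_i) \geq \mu_i(2, A_i \cup A_j)$, this yields the claimed pairwise bound (with the $+1$ coming from odd values of $v_i(A_i \cup A_j)$). The second step is a telescoping use of submodularity along the chain $A_i \subseteq A_i \cup A_{j_1} \subseteq A_i \cup A_{j_1} \cup A_{j_2} \subseteq \cdots$: this gives $v_i(\bigcup_k A_k) \leq v_i(A_i) + \sum_{j \neq i}\bigl(v_i(A_i \cup A_j) - v_i(A_i)\bigr) \leq v_i(A_i) + (n-1)(v_i(A_i) + 1) = n\, v_i(A_i) + (n-1)$. The third step is to argue that the unallocated goods contribute nothing: for each $g \in U$, social welfare-maximality together with Lemma \ref{lemma:sw-gap} (applied to the trivial length-zero path at $g$) forces $g \notin F_i(A_i)$, so $v_i(A_i + g) = v_i(A_i)$ by binary marginals; submodularity then lifts this to $v_i(\bigcup_k A_k + g) = v_i(\bigcup_k A_k)$, and iterating over $g \in U$ yields $v_i([m]) = v_i(\bigcup_k A_k)$. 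Combined with the trivial estimate $\mu_i \leq v_i([m])$, the bound $\mu_i \leq n\, v_i(A_i) + (n-1)$ follows.

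For the edge regime $v_i(A_i) = 0$, I would argue directly that $\mu_i = 0$. Since $A_i \in \mathcal{I}_i$ with $v_i(A_i) = |A_i|$, we have $A_i = \emptyset$. The $\PMMS$ condition then forces $\mu_i(2, A_j) = 0$ for every $j \neq i$. If any $A_j$ contained two non-loops $g_1, g_2$ of agent $i$ (i.e., elements with $\{g_1\}, \{g_2\} \in \mathcal{I}_i$), then placing $g_1$ and $g_2$ in opposite parts of a $2$-partition of $A_j$ would yield $\mu_i(2, A_j) \geq 1$, a contradiction. So each $A_j$ carries at most one non-loop of $i$, leaving at most $n-1$ non-loops of $i$ in $\bigcup_k A_k$. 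Welfare-maximality (again via Lemma \ref{lemma:sw-gap} on the length-zero path) forbids any non-loop of $i$ from lying in $U$. Hence $[m]$ contains at most $n-1$ non-loops of $i$, whereas $\mu_i \geq 1$ would require an $n$-partition of $[m]$ with every part containing a non-loop of $i$, i.e., at least $n$ non-loops. Therefore $\mu_i = 0$, and the claim $v_i(A_i) \geq \mu_i/(2n-1)$ holds trivially.

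The main obstacle I foresee is the edge regime: the bound $\mu_i \leq n\, v_i(A_i) + (n-1)$ from the main regime degenerates to $\mu_i \leq n-1$ when $v_i(A_i) = 0$, which is not strong enough, so one needs the dedicated combinatorial argument based on counting non-loops. The $+1$ slack in the pairwise bound, coming from the $\lfloor \cdot \rfloor$ rounding when $v_i(A_i \cup A_j)$ is odd, is also what pins the denominator at $2n-1$ rather than the cleaner $2n$.
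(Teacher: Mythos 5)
Your proof is correct, but it takes a genuinely different route from the paper's. The paper argues by contradiction: it first invokes Theorem~\ref{theorem:pmms-efone} to get that $\alloc$ is $\EFone$, assumes $v_i(A_i) < \frac{1}{2n-1}\mu_i$, shows $v_i(A_i)\geq 1$ via a pigeonhole argument on $n$ singleton ``non-loops'' drawn from the bundles of agent $i$'s maximin partition, then uses the augmentation property to extract from some maximin bundle $B_k$ a large independent set $S\subseteq B_k\setminus A_i$ of size exceeding $(2n-2)v_i(A_i)+1$, and pigeonholes $S$ onto the other $n-1$ bundles to find an agent $j$ with $v_i(A_j)\geq v_i(A_i)+2$, contradicting $\EFone$. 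You instead bound $\mu_i$ directly and quantitatively: the halving observation $\mu_i(2,S)\geq\lfloor v_i(S)/2\rfloor$ converts the $\PMMS$ condition into $v_i(A_i\cup A_j)\leq 2v_i(A_i)+1$ for each $j$, telescoping via submodularity gives $v_i(\bigcup_k A_k)\leq n\,v_i(A_i)+(n-1)$, welfare-maximality kills the marginals of unallocated goods so that $v_i([m])=v_i(\bigcup_k A_k)$, and $\mu_i\leq v_i([m])$ closes the argument for $v_i(A_i)\geq 1$; your separate non-loop count handles $v_i(A_i)=0$. Your route is self-contained (it does not pass through the $\EFone$ implication), makes the source of the factor $2n-1$ transparent via the explicit bound $\mu_i\leq n\,v_i(A_i)+(n-1)$, and yields the reusable intermediate fact that $\PMMS$ caps every pairwise union value at $2v_i(A_i)+1$; the paper's route is shorter given that Theorem~\ref{theorem:pmms-efone} is already established, and its counting argument leans only on $\EFone$ rather than on the full strength of $\PMMS$. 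Both proofs use welfare-maximality in the same essential way, namely to rule out unallocated goods with positive marginal value for agent $i$.
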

\begin{proof}
Since the given (partial) allocation $\alloc$ is $\PMMS$, it is $\EFone$ as well (Theorem \ref{theorem:pmms-efone}). Now, towards a contradiction, assume that $\alloc$ is not $\frac{1}{2n-1}$-$\MMS$, i.e., there exists an agent $i$ for which $v_i(A_i) < \frac{1}{2n-1} \ \mu_i$. Hence, $\mu_i \geq (2n-1) \cdot v_i(A_i) + 1 \geq 1$.

The definition of $\mu_i$ implies that there exists an $n$-partition $(B_1,\ldots, B_n)$ of $[m]$ such that, for all $k \in [n]$, we have $v_i(B_k) \geq \mu_i \geq (2n-1)  v_i(A_i) + 1 \geq 1$. 

First we will show that $v_i(A_i) \geq 1$. Since $v_i(B_k) \geq 1$ for all $k \in [n]$, there exists $n$ distinct goods $G \coloneqq \{g_1, \ldots, g_n\}$ with $g_k \in B_k$ and $v_i(g_k) = 1$ for all $k \in [n]$. Now, if $v_i(A_i) = 0$, then the $n$ goods in $G$ must have be allocated among the remaining $n-1$ agents.\footnote{Any good $g \in G$ cannot be unallocated in $\alloc$, since in such a case assigning it to agent $i$ would increase the social welfare of $\alloc$ contradicting the fact that $\alloc$ maximizes social welfare.} Therefore, there is an agent $j \neq i$ with two goods from $G$ (i.e., $|A_j \cap G| \geq 2$) and this would contradict the fact that $\alloc$ is $\EFone$. Hence, $v_i(A_i) \geq 1$.

For each bundle $B_k$ we have $v_i(B_k) \geq (2n-1) v_i(A_i) + 1$. The augmentation property---invoked between $B_k$ and $A_i$---guarantees the existence of an independent (with respect $\mathcal{I}_i$) subset $S \subset B_k \setminus A_i$ with the property that $|S| = v_i(S) = v_i(B_k) - v_i(A_i) > (2n-2)  v_i(A_i) + 1$. Also, each good $g \in S$ must be allocation in $\alloc$, since otherwise including it in $A_i$ would increase $v_i(A_i)$ and, hence, the social welfare (contradicting the welfare optimality of $\alloc$). Therefore, in $\alloc$, the set $S$ must have been assigned among the remaining $n-1$ agents. In particular, there exists an agent $j$ such that $|A_j \cap S| \geq \frac{1}{n-1} \ |S| > \frac{1}{n-1} \left( (2n-2)  v_i(A_i) + 1 \right) = 2 v_i(A_i) + \frac{1}{n-1}$. Since $|A_j \cap S|$ and $v_i(A_i)$ are integers, we have $|A_j \cap S| \geq 2 v_i(A_i) + 1 \geq v_i(A_i) + 2$. The last inequality follows from the bound $v_i(A_i) \geq 1$. 
 
Finally, we note that $v_i(A_j) \geq v_i(A_j \cap S) = |A_j \cap S| \geq v_i(A_i) + 2$; here, the equality is a consequence of the independence of $S$. This contradicts the fact that $\alloc$ is an $\EFone$ allocation and the theorem follows. 
\end{proof}

\end{document}